\newcommand{\E}{\mathop{\mathbf{E}}}
\newcommand{\floor}[1]{\left \lfloor{#1}\right \rfloor}
\newcommand{\OPT}{\mathrm{OPT}}
\newcommand{\CONN}{\mathrm{conn}}
\newcommand{\LCA}{\mathrm{lca}}
\newcommand{\SHAPMST}{\mathrm{Shap}_\mathrm{MST}}
\newcommand{\SHAPMATCH}{\mathrm{Shap}_\mathrm{Match}}
\newtheorem{theorem}{Theorem}[section]
\newtheorem{lemma}[theorem]{Lemma}
\theoremstyle{definition}
\newtheorem{definition}[theorem]{Definition}
\newtheorem{example}[theorem]{Example}
\title{Lipschitz Continuous Allocations for Optimization Games}
  \author{Soh Kumabe\footnote{Supported by JST, PRESTO Grant Number JPMJPR192B and JSPS KAKENHI Grant Number JP21J20547. Most of this work was done while the author was a student at The University of Tokyo.}\\
  CyberAgent\\
  \texttt{kumabe\_soh@cyberagent.co.jp}
  \and
  Yuichi Yoshida\footnote{Supported by JST, PRESTO Grant Number JPMJPR192B and JSPS KAKENHI Grant Number JP20H05965.}\\
  National Institute of Informatics\\
  \texttt{yyoshida@nii.ac.jp}}
\date{May 2024}
\begin{document}

\maketitle

\begin{abstract}
In cooperative game theory, the primary focus is the equitable allocation of payoffs or costs among agents.
However, in the practical applications of cooperative games, accurately representing games is challenging.
In such cases, using an allocation method sensitive to small perturbations in the game can lead to various problems, including dissatisfaction among agents and the potential for manipulation by agents seeking to maximize their own benefits. Therefore, the allocation method must be robust against game perturbations.

In this study, we explore optimization games, in which the value of the characteristic function is provided as the optimal value of an optimization problem. To assess the robustness of the allocation methods, we use the Lipschitz constant, which quantifies the extent of change in the allocation vector in response to a unit perturbation in the weight vector of the underlying problem. Thereafter, we provide an algorithm for the matching game that returns an allocation belonging to the $\left(\frac{1}{2}-\epsilon\right)$-approximate core with Lipschitz constant $O(\epsilon^{-1})$. Additionally, we provide an algorithm for a minimum spanning tree game that returns an allocation belonging to the $4$-approximate core with a constant Lipschitz constant.

The Shapley value is a popular allocation that satisfies several desirable properties. Therefore, we investigate the robustness of the Shapley value. We demonstrate that the Lipschitz constant of the Shapley value for the minimum spanning tree is constant, whereas that for the matching game is $\Omega(\log n)$, where $n$ denotes the number of vertices.

\end{abstract}




\section{Introduction}

\subsection{Background and Motivation}

Cooperative games model decision-making scenarios in which multiple agents can achieve greater benefits through cooperation. A primary concern in cooperative game theory is the allocation of payoffs or costs provided by the grand coalition in an acceptable manner to each agent. Among the cooperative games, those defined by optimization problems corresponding to the set of agents involved are known as \emph{optimization games}~\cite{deng1999algorithmic}.

Consider the following well-known examples formulated by the matching game (MG)~\cite{eriksson2001stable,benedek2023complexity} and the minimum spanning tree game (MSTG)~\cite{claus1973cost}:
\begin{description}
    \item[MG:] Members of a tennis club form pairs for a doubles tournament. Each pair of players has a predicted value for the prize money they would win if they teamed up. How should the total prize money won by all pairs be distributed among members?
    \item[MSTG:] Multiple facilities cooperate to construct a power grid to receive electricity from a power plant. Each potential power line has a predetermined installation cost. When constructing a power grid that ensures electricity distribution to all facilities, what cost should each facility bear?
\end{description}
In these examples, the characteristic function of the game often contains errors, or can be manipulated through deliberate misreporting. For instance,
\begin{description}
\item[MG:] It is challenging to accurately predict the compatibility of pairs who have never teamed up before. Moreover, pairs known to work well may hide this fact.
\item[MSTG:] Costs for power line installation might be misestimated owing to unforeseen terrain or geological conditions caused by natural disasters or inadequate surveys, land rights, or landscape regulations. Facilities may conceal these issues.
\end{description}
In such uncertain situations, allocations that drastically change with slight perturbations in the game can lead to problems, such as
\begin{description}
\item[MG:] If minor estimation errors in predicted prize money significantly change the benefits for each player, players might not be satisfied with the allocation. In addition, when someone falsely reports a substantial increase in their gain, it becomes difficult to prove the intent of the manipulation if the degree of falsehood is small.
\item[MSTG:] Facilities might not accept the allocation if trivial issues in power line construction significantly increase their cost burden. Moreover, if such issues are used to significantly reduce their own costs or increase those of competing facilities, several minor construction issues may be concealed, leading to risk management problems.
\end{description}
To avoid these issues, it is desirable to use allocations that are robust against perturbations in real-world cooperative games.

Before proceeding, we define some terms in cooperative game theory. The \emph{cooperative game} $(V,\nu)$ is defined as a pair consisting of a set of \emph{agents} $V$ and a \emph{characteristic function} $\nu\colon 2^V \to \mathbb{R}_{\geq 0}$ representing the payoff or cost obtained when subsets of agents cooperate.
An \emph{optimization game} is defined by the optimization problem $\mathcal{P}$ in a discrete structure. The \emph{$\mathcal{P}$-game} $(G,w)$ is defined from a pair of structures $G$ consisting of \emph{agent set} $V$, \emph{edge set} $E$, and a weight vector $w\in \mathbb{R}_{\geq 0}^{E}$. In several games, $G$ is a graph $(V,E)$.
For each subset $S \subseteq V$, the characteristic function value $\nu(S)$ is defined as the optimum value of $\mathcal{P}$ on the substructure corresponding to $S$ (e.g., the subgraph induced by $S$) with respect to weight vector $w$.
An optimization game is a \emph{welfare allocation game} (resp., \emph{cost allocation game}) if each agent intends to maximize (resp., minimize) the value allocated to it. 


Now, we formally introduce matching games and minimum spanning tree games, as discussed in previous examples.
\begin{definition}[Matching game~\cite{benedek2023complexity,deng1999algorithmic}]
Let $G=(V,E)$ be an undirected graph. For a vertex set $S\subseteq V$ and an edge weight vector $w \in \mathbb{R}_{\geq 0}^{E}$, let $\OPT(S,w)$ denote the maximum matching weight in $G[S]$ with respect to weight $w$, where $G[S]$ represents the subgraph of $G$ induced by $S$. The \emph{matching game} of $G$ with respect to weight $w$ is defined as $(V,\nu)$, where $\nu(S)=\OPT(S,w)$.
The matching game is a welfare-allocation game.
\end{definition}

\begin{definition}[Minimum spanning tree game~\cite{bird1976cost,claus1973cost}]
Let $G=(V\cup \{r\},E)$ be an undirected graph. For a vertex set $S\subseteq V$ and edge weight vector $w\in \mathbb{R}_{\geq 0}^{E}$, let $\OPT(S,w)$ denote the minimum weight of a spanning tree in $G[S\cup \{r\}]$ with respect to weight $w$.
Notably, vertex $r$ does not correspond to the agent.
To ensure that the characteristic function has finite values, we assume that edge $(r,v)$ exists for all $v \in V$.
The \emph{minimum spanning tree game} of $G$ with respect to weight $w$ is defined as $(V,\nu)$, where $\nu(S)=\OPT(S,w)$.
The minimum spanning tree game is a cost-allocation game.
\end{definition}

Let us return to the discussion of the robustness of the allocation.
To measure the robustness of the allocation, we introduce the concept of \emph{Lipschitz continuity} in the allocation of these games, analogous to the introduction of Kumabe and Yoshida~\cite{kumabe2023lipschitz} for discrete optimization problems under the same name. 
Algorithm $\mathcal{A}$ that takes a weight vector $w\in \mathbb{R}^{E}$ and returns an allocation $x\in \mathbb{R}_{\geq 0}^{V}$ is \emph{$L$-Lipschitz} or has \emph{Lipschitz constant} $L$ if: 
\begin{align}
    \sup_{\substack{w,w'\in \mathbb{R}_{\geq 0}^{E}\\w\neq w'}}\frac{\|\mathcal{A}(w)-\mathcal{A}(w')\|_1}{\|w-w'\|_1}\leq L.\label{eq:lip_game}
\end{align}
If the Lipschitz constant of the allocation method is small, the change in the allocation in response to a unit change in the weight vector is guaranteed to be small. Employing such an allocation method can resolve these problems as follows:
\begin{description}
    \item[MG:] Minor mistakes in estimating the compatibility of pairs will not significantly affect the overall distribution of prize money, making it easier for agents to accept the allocation. Additionally, substantial misreporting is necessary to increase benefits significantly through declaration adjustments, making schemes more likely to be exposed.
    \item[MSTG:] Minor issues related to the installation of power lines will not cause significant fluctuations in the cost burden for each facility, making it easier for them to accept their share of the costs. Additionally, because the loss incurred by reporting such issues is small, the likelihood of these issues being concealed is reduced.
\end{description}


\subsubsection{The core.}
The \emph{core} is the most fundamental solution concept in cooperative game theory. An allocation vector $x\in \mathbb{R}^{V}$ is in the core of (welfare allocation) cooperative game $(V,\nu)$ if:
\begin{align}
    \sum_{v\in S}x_v \geq \nu(S)\quad (S\subsetneq V),\quad \sum_{v\in V}x_v = \nu(V).\label{eq:welfare_approx}
\end{align}
Similarly, an allocation vector $x\in \mathbb{R}^{V}$ is in the core of (cost allocation) cooperative game $(V,\nu)$ if:
\begin{align}
    \sum_{v\in S}x_v \leq \nu(S)\quad (S\subsetneq V),\quad \sum_{v\in V}x_v = \nu(V).\label{eq:cost_approx}
\end{align}

Because the core is one of the most fundamental solution concepts, it is natural to desire a Lipschitz continuous algorithm $\mathcal{A}$ that takes a weight vector as the input and returns an allocation belonging to the core. However, this can only be achieved in a limited number of situations for two reasons. First, a non-empty core does not necessarily exist in all games. Second, even for games in which the core exists in all instances, there is no guarantee that a vector from the core can be selected such that the Lipschitz constant of $\mathcal{A}$ remains small.

To examine the second reason, we consider the \emph{assignment game} introduced by Shapley and Shubik~\cite{shapley1971assignment}, which is a special case of the matching game in which the underlying graph is bipartite and known to have a non-empty core~\cite{deng1999algorithmic,shapley1971assignment}.

\begin{example}\label{ex:pathmatch}
Let $n$ be an odd number greater than or equal to $5$. Let us consider a path $G=(V,E)$ consisting of $n$ vertices labeled sequentially as $v_1, \dots, v_n$. The weight vectors $w, w' \in \mathbb{R}_{\geq 0}^{E}$ are defined as follows:
\begin{align*}
    w_{(v_i,v_{i+1})}=1 \quad (i=1,\dots, n-1),
    \qquad
    w'_{(v_i,v_{i+1})}=\begin{cases}
        1 & \text{if }2\leq i\leq n-2\\
        0 & \text{otherwise}
    \end{cases}
\end{align*}
In this case, the allocations $x, x'$ belonging to the core of the assignment game for $w$ and $w'$ are both unique and obtained as follows:
\begin{align*}
    x_i=\begin{cases}
        1 & \text{if $i$ is even}\\
        0 & \text{otherwise}             
    \end{cases}
    \qquad
    x'_i=\begin{cases}
        1 & \text{if $i$ is odd and $i\not \in \{1,n\}$}\\
        0 & \text{otherwise}
    \end{cases}
\end{align*}
If $\mathcal{A}$ is an algorithm that takes a weight vector and returns a vector in the core of the game, then it must satisfy:
\begin{align*}
    \frac{\|\mathcal{A}(w)-\mathcal{A}(w')\|_1}{\|w-w'\|_1}=\frac{n-2}{2}=\Omega(n)
\end{align*}
Thus, algorithm $\mathcal{A}$ must have a large Lipschitz constant, $\Omega(n)$.
\end{example}

Therefore, we consider providing a Lipschitz continuous algorithm that outputs allocations that satisfy some looser solution concepts. The least core~\cite{shapley1966quasi} is one such solution concept that can be defined even for games in which a core does not exist. However, this does not resolve the problem of a large Lipschitz constant (for instance, in the game in Example~\ref{ex:pathmatch}, the core and the least core coincide, resulting in a Lipschitz constant of $\Omega(n)$). Instead, we consider the approximate core, which is a solution concept that multiplicatively relaxes the constraints of partial coalitions.

\subsubsection{Approximate Core.}

The approximate core was introduced by Faigle and Kern~\cite{faigle1993some} as a useful solution concept for games where the core is empty. Intuitively, the approximate core represents the core when the constraints for partial coalitions are relaxed by a factor $\alpha$.
For $\alpha\leq 1$, the allocation vector $x\in \mathbb{R}^{V}$ is in the \emph{$\alpha$-(approximate) core} of the (welfare allocation) cooperative game $(V,\nu)$ if:
\begin{align*}
    \sum_{v\in S}x_v \geq \alpha \nu(S)\quad (S\subsetneq V),\quad \sum_{v\in V}x_v = \nu(V).
\end{align*}
Similarly, for $\alpha\geq 1$, the allocation vector $x\in \mathbb{R}^{V}$ is in the \emph{$\alpha$-(approximate) core} of the (cost allocation) cooperative game $(V,\nu)$ if:
\begin{align*}
    \sum_{v\in S}x_v \leq \alpha \nu(S)\quad (S\subsetneq V),\quad \sum_{v\in V}x_v = \nu(V).
\end{align*}

In this study, we provide algorithms with small Lipschitz constants that return an $\alpha$-approximate core for some constant $\alpha$ in several optimization games. 
For the matching game, we obtain the following:
\begin{theorem}\label{thm:matching}
Let $\epsilon\in \left(0,\frac{1}{2}\right]$. For the matching game, there is a polynomial-time algorithm with a Lipschitz constant $O(\epsilon^{-1})$ that returns $\left(\frac{1}{2}-\epsilon\right)$-approximate core allocation.
\end{theorem}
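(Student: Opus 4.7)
The plan is to build the allocation from a strongly convex regularization of the dual of the matching LP and then rescale to satisfy efficiency, so that the resulting vector lies in the $(\tfrac{1}{2}-\epsilon)$-core and inherits an $O(\epsilon^{-1})$ Lipschitz constant coming from $\epsilon$-strong convexity.

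I begin from the classical LP-duality construction of a $\tfrac{1}{2}$-core allocation. Let $\OPTLP(S,w)$ denote the value of the fractional matching LP on $G[S]$; its dual is
\begin{equation*}
\min \sum_{v\in V} y_v \quad \text{s.t.}\quad y_u+y_v\geq w_{uv}\ \text{for all }(u,v)\in E,\ y_v\geq 0.
\end{equation*}
Any dual-feasible $y$ satisfies $\sum_{v\in S} y_v\geq \OPTLP(S,w)\geq \nu(S)$ for every $S\subseteq V$, and since the integrality gap of the matching LP is at most $2$, $\OPTLP(V,w)\leq 2\nu(V,w)$. Hence if $y^{\star}$ is a dual optimum, then $x_v:=y^{\star}_v\cdot \nu(V,w)/\sum_u y^{\star}_u$ is a $\tfrac{1}{2}$-core allocation; the only issue is that $y^{\star}$ can jump discontinuously in $w$.

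To regain Lipschitz continuity I replace the dual objective by the strongly convex surrogate
\begin{equation*}
\min \sum_v y_v+\tfrac{\epsilon'}{2}\sum_v y_v^2 \quad \text{s.t.}\quad y_u+y_v\geq w_{uv},\ y_v\geq 0,
\end{equation*}
for a regularization strength $\epsilon'=\Theta(\epsilon)$ tuned to the weight scale, and let $y^\epsilon(w)$ be its unique minimizer. Standard sensitivity analysis for $\epsilon'$-strongly convex QPs with polyhedral constraints yields that $w\mapsto y^\epsilon(w)$ is $O(\epsilon^{-1})$-Lipschitz in the sense of~\eqref{eq:lip_game}. Feasibility of $y^\epsilon$ in the original dual is automatic, so $\sum_{v\in S}y^\epsilon_v\geq \nu(S)$ for every $S$, and an optimality comparison with $y^{\star}$ gives $\sum_v y^\epsilon_v\leq (2+O(\epsilon))\nu(V)$. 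Setting $x_v:=y^\epsilon_v\cdot \nu(V,w)/\sum_u y^\epsilon_u$ then gives $\sum_v x_v=\nu(V)$ and $\sum_{v\in S}x_v\geq (\tfrac{1}{2}-\epsilon)\nu(S)$ from the ratio bound, while the overall $O(\epsilon^{-1})$ Lipschitz constant follows by combining the Lipschitz property of $y^\epsilon$, the $1$-Lipschitz property of $\nu(V,\cdot)$, and a lower bound on $\sum_u y^\epsilon_u$ via the quotient rule.

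The main obstacle will be the rescaling step. The quotient $\nu(V,w)/\sum_u y^\epsilon_u(w)$ can behave poorly near instances where $\nu(V,w)\approx 0$, since both the numerator and the denominator become simultaneously small, so I will likely need either a case analysis on the weight scale or an additive correction that subtracts a provably small, Lipschitz amount from each $y^\epsilon_v$ in place of the naive multiplicative rescaling. A second subtlety is that the $O(\epsilon^{-1})$ Lipschitz bound on the regularized dual only holds once the regularization coefficient $\epsilon'$ is pinned to the weight scale, so the algorithm must first extract a Lipschitz surrogate of that scale from the input.
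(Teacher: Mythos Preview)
Your approach differs fundamentally from the paper's. The paper does not touch LP duality or convex regularization; it runs a greedy maximal matching on weights rounded to a random geometric grid $\{\alpha^{i+b}:i\in\mathbb{Z}\}$ and assigns each matched vertex the rounded weight of its matching edge. The Lipschitz bound is purely combinatorial (Lemma~\ref{lem:match_diff}): when a single rounded weight changes, the two greedy matchings differ only along one alternating path through the changed edge, with strictly decreasing weights along that path, so the $\ell_1$ difference telescopes to $2\widehat{w}'_f$. Integrating over the offset $b$ and applying Lemma~\ref{lem:benri_welfare} finishes the proof.

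The central gap in your proposal is the assertion that ``standard sensitivity analysis for $\epsilon'$-strongly convex QPs with polyhedral constraints'' gives an $O(\epsilon^{-1})$-Lipschitz minimizer. Strong convexity of the objective controls sensitivity to perturbations of the \emph{objective}, but here $w$ enters only through the \emph{constraints} $y_u+y_v\ge w_{uv}$. The bound that strong convexity alone delivers is merely H\"older: for $w'=w+\delta\bm{1}_f$ with $f=(u,v)$, the point $y^\epsilon(w)+\delta\bm{1}_u$ is feasible for $w'$, and combining first-order optimality at $y^\epsilon(w)$ and $y^\epsilon(w')$ yields $\epsilon'\|y^\epsilon(w)-y^\epsilon(w')\|_2^2\le \delta\bigl(1+\epsilon'\,y^\epsilon(w')_u\bigr)$, hence a $\sqrt{\delta/\epsilon'}$ rate rather than $\delta/\epsilon'$. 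Upgrading to genuine Lipschitz continuity requires the piecewise-affine structure of parametric QP, whose Lipschitz constant is governed by the conditioning of the active edge-incidence submatrices of $G$, not by $\epsilon'$; obtaining an $n$-independent $\ell_1$-to-$\ell_1$ bound of the form~\eqref{eq:lip_game} is precisely the hard part and cannot be waved through. The scale issue you flag is a second face of the same problem: to keep $\sum_v y^\epsilon_v\le (2+O(\epsilon))\nu(V)$ you need $\epsilon'\cdot\max_e w_e = O(\epsilon)$, so $\epsilon'$ must shrink with the weight scale, further eroding whatever stability the regularization was meant to buy and leaving the algorithm's definition dependent on a quantity you have not shown how to extract in a Lipschitz way.
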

Note that Faigle and Kern~\cite{faigle1993some} and Vazirani~\cite{vazirani2022general} showed that the $\frac{2}{3}$-approximate core of the matching game is non-empty, and that this is also a tight bound. However, their allocation, which is constructed using the optimal solution of the matching LP, is not Lipschitz continuous. Our allocation compromises the core approximability by $\frac{1}{6}+\epsilon$ to ensure Lipschitz continuity.

Our allocation was inspired by the Lipschitz continuous algorithm for the maximum weight-matching problem proposed by Kumabe and Yoshida~\cite{kumabe2023lipschitz}. Similarly, the proposed algorithm is based on the greedy method. We emphasize our results because their algorithm reduces the approximation ratio to $\frac{1}{8}-\epsilon$ to obtain a Lipschitz constant of $O(\epsilon^{-1})$. Although the definitions of Lipschitz continuity and the approximation ratio for discrete algorithms differ from those for allocations, making a simple comparison infeasible, our analysis is simpler and offers better approximation guarantees.


For the minimum spanning tree game, we have the following.
\begin{theorem}\label{thm:mst}
For the minimum spanning tree game, there is a polynomial-time algorithm with Lipschitz constant $O(1)$ that returns $4$-approximate core allocation.
\end{theorem}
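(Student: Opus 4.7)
The plan is to construct the allocation via integration over edge-weight thresholds, exploiting the classical identity $\nu(V) = \int_0^\infty (c(t) - 1)\,dt$, where $c(t)$ is the number of connected components of the subgraph $G_t := (V \cup \{r\},\, \{e \in E : w_e \le t\})$. For each vertex $v$, let $C_t(v)$ be its component in $G_t$, and define $f_v(t) := 1/|C_t(v)|$ when $r \notin C_t(v)$, and $f_v(t) := 0$ otherwise. The proposed allocation is $x_v := \int_0^\infty f_v(t)\,dt$: at each threshold, each component not yet connected to $r$ owes one unit of residual MST cost, and $f_v$ shares this obligation uniformly among the component's members. Since $f_v$ is piecewise constant with breakpoints at the edge weights, $x_v$ is computable in polynomial time via a single Kruskal-style union--find sweep.

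Efficiency is immediate: for each $t$, every non-$r$ component $D$ of $G_t$ contributes $|D|\cdot(1/|D|) = 1$ to $\sum_v f_v(t)$, giving $\sum_v f_v(t) = c(t) - 1$ and hence $\sum_v x_v = \nu(V)$ after integration.

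For the approximate core, I fix $S \subsetneq V$ and compare $\sum_{v\in S} f_v(t)$ with $c^S(t) - 1$, where $c^S(t)$ counts components of $G[S\cup\{r\}]$ restricted to edges of weight $\le t$. The key observation is that any component $D^S$ of this restricted graph sits inside a single component $C$ of $G_t$ (every path in the former is a path in the latter), so $|C| \ge |D^S|$ and, whenever $f_v$ is positive on $D^S$, $\sum_{v\in D^S} f_v(t) = |D^S|/|C| \le 1$. Summing over the non-$r$ components of the restricted graph yields $\sum_{v\in S} f_v(t) \le c^S(t) - 1$ pointwise; integrating and invoking $\nu(S) = \int_0^\infty (c^S(t) - 1)\,dt$ gives the core inequality. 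Any additional constant that creeps in during the pointwise bookkeeping (for instance, when tracking vertices of $S$ whose $G_t$-component already contains $r$ but whose $G[S\cup\{r\}]_t$-component does not) is absorbed into the factor $4$ in the statement.

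For the Lipschitz bound, I perturb a single weight $w_e$ by $\delta$: the graph $G_t$ is unchanged outside a window of length $\delta$ around $w_e$, and inside this window either $e$ is a bridge in one version (so its removal or addition causes a single split or merge of components) or $f_\cdot$ is unaffected. A short case analysis on whether $r$ lies in the affected component shows $\sum_v |f_v^{\mathrm{new}}(t) - f_v^{\mathrm{old}}(t)| \le 1$ at every $t$ in the window, so the induced $\ell_1$ change in the allocation is at most $\delta$; interpolating one edge at a time gives $\|\mathcal{A}(w) - \mathcal{A}(w')\|_1 \le O(1)\cdot\|w - w'\|_1$. The main obstacle is not the Lipschitz analysis, which is a clean bridge-based case check, but rather sharpening the pointwise core bound; this requires careful accounting of how vertices outside $S$ inflate the component sizes in $G_t$ relative to $G[S\cup\{r\}]_t$, especially at thresholds where a component of $G_t$ absorbs $r$ while its restriction to $S\cup\{r\}$ still splits into several non-$r$ pieces.
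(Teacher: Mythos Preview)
Your approach is correct and differs substantially from the paper's. The paper rounds every $w_e$ to a power of $2$ with a random offset $b$, builds an auxiliary tree from the rounded weights, allocates $h_u/|X_e|$ along each auxiliary edge $e=(u,u')$ with $r\notin T_e$, integrates over $b$, and then rescales via Lemma~\ref{lem:benri_cost}; the rounding is what produces both the Lipschitz bound and the factor $4$ (one factor of $2$ from $\widehat w_e\le 2w_e$, another from $h_u\ge 2h_{u'}$). Your threshold-integration rule $x_v=\int_0^\infty f_v(t)\,dt$ bypasses all of this, and your hedging is in fact unnecessary: the pointwise bound $\sum_{v\in S}f_v(t)\le c^S(t)-1$ holds exactly as you wrote it, because the vertices you flag as troublesome---those in $S$ whose $G_t$-component already contains $r$---necessarily have $f_v(t)=0$ (a path to $r$ inside $G[S\cup\{r\}]_t$ is also a path in $G_t$, and conversely if the $G_t$-component contains $r$ then $f_v$ vanishes regardless of what happens in the restricted graph), so they only \emph{help} the inequality. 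Hence your allocation is already efficient, lies in the \emph{exact} core, and your bridge case-check gives Lipschitz constant $1$ rather than merely $O(1)$; your route thus yields a strictly stronger statement than Theorem~\ref{thm:mst}. The paper's rounding machinery appears to be inherited from the matching-game template of Section~\ref{sec:matching}, where it is genuinely needed; for the MST game your more elementary argument suffices.
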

The core of the minimum spanning tree game is non-empty; an allocation by Bird~\cite{bird1976cost} is known to belong to the core and can be computed in polynomial time. However, it is not Lipschitz continuous. As in the matching game, our allocation compromises the core approximability to ensure Lipschitz continuity.

The original motivation for the approximate core was to provide useful solution concepts for games with non-empty cores. Therefore, no studies have been conducted on the approximate core of the minimum spanning tree game. In this regard, our setting demonstrates the usefulness of considering an approximate core, even for games with a non-empty core.

\subsubsection{Shapley Value.}
Let $\mathfrak{S}_V$ be the set of all permutations over $V$.
For $\sigma\in \mathfrak{S}_V$ and $v\in V$, let $x_{\sigma,v}=\nu(\{\sigma(1),\dots, \sigma(k)\})-\nu(\{\sigma(1),\dots, \sigma(k-1)\})$, where $k$ is the integer with $\sigma(k)=v$.
The \emph{Shapley value}~\cite{shapley1965balanced} of game $(V,\nu)$ is the vector $s$ defined by:
\begin{align*}
    s_v = \frac{1}{|V|!}\sum_{\sigma\in \mathfrak{S}_V}x_{\sigma,v}.
\end{align*}

The Shapley value does not necessarily belong to the core, even if it exists, and the computation is \#$\mathsf{P}$-hard in most optimization games. However, they exhibit various desirable properties and have a wide range of applications~\cite{hart1989shapley}. Therefore, investigating the Lipschitz continuity of Shapley values is a natural task. 

Considering its various properties, it is natural to expect the Shapley value to always have a small Lipschitz constant for general optimization games. Conversely, given the difficulty in computing it, it is natural to anticipate that it may not have a bounded Lipschitz constant. However, neither was accurate. Specifically, we demonstrate that whether the Shapley value in optimization games has a small Lipschitz constant depends on the game.
\begin{theorem}\label{thm:shap_matching}
There is a graph $G$ such that the Shapley value of the matching game on $G$ has a Lipschitz constant $\Omega(\log n)$.
\end{theorem}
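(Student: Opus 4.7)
The plan is to construct a graph $G$ on $n$ vertices and weight vectors $w,w'$ with $\|\SHAPMATCH(G,w)-\SHAPMATCH(G,w')\|_1 \geq \Omega(\log n)\,\|w-w'\|_1$. Writing $s_v(w)$ for the $v$-th coordinate of $\SHAPMATCH(G,w)$, I first note that $\SHAPMATCH(G,\cdot)$ is piecewise linear in $w$, since each $\nu(S)=\OPT(S,w)$ is a maximum of linear functions and the Shapley value is a fixed linear combination of these. Hence the Lipschitz constant in~(\ref{eq:lip_game}) equals the supremum, over generic weights, of the $\ell_1\!\to\!\ell_1$ operator norm of the Jacobian of $\SHAPMATCH(G,\cdot)$, which is $\max_{e}\sum_{v}|\partial s_v/\partial w_e|$ (i.e.\ the largest column sum).

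Using the Shapley formula together with the fact that perturbing $w_e$ by a small $\delta$ changes $\nu(S)$ by $\delta\cdot\mathbb{1}[e\in M^\star(G[S])]$ (where $M^\star$ is the generically unique max matching), I would derive
\[
\frac{\partial s_v}{\partial w_e} \;=\; \Pr_\sigma\!\bigl[e\in M^\star(G[S^\sigma_v])\bigr]\;-\;\Pr_\sigma\!\bigl[e\in M^\star(G[S^\sigma_v\setminus\{v\}])\bigr],
\]
for $\sigma$ a uniformly random permutation of $V$ and $S^\sigma_v$ the prefix of $\sigma$ ending at $v$. Summing $|\partial s_v/\partial w_e|$ over $v$ then equals the expected number of flips of the Boolean sequence $\mathbb{1}[e\in M^\star(G[S^\sigma_t])]$ as $t$ grows from $0$ to $n$, so the task reduces to exhibiting $G$, $e$, and a generic $w$ for which this expected flip count is $\Omega(\log n)$.

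To achieve this, I would build $G$ from $\Theta(\log n)$ nested gadgets whose weights are tuned so that at each scale the membership of $e$ in $M^\star$ is a near-tie resolved by a single ``pivot'' vertex, with the pivots at distinct scales lying in disjoint vertex groups. Then the arrivals of pivots from different scales within a random permutation are independent, the flips are approximately uncorrelated across scales, and the expected flip count sums to $\Omega(\log n)$. The hard part will be designing this gadget explicitly and verifying scale-independence combinatorially: simple families like paths, cycles, stars, or uniform-weight bipartite graphs yield only $O(1)$ expected flip counts (my preliminary calculations give values bounded by small absolute constants even for large $n$), so non-uniform edge weights tuned to produce simultaneous near-ties at $\Theta(\log n)$ distinct scales will be essential, and one must track carefully how the maximum matching of $G[S]$ evolves as $S$ is grown one vertex at a time.
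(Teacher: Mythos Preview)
Your proposal diverges from the paper and, as written, has two real gaps.

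First, the identity you assert, $\sum_v\bigl|\partial s_v/\partial w_e\bigr|=\E_\sigma[\#\text{ flips of }\mathbb{1}[e\in M^\star(G[S^\sigma_t])]]$, is false. By the triangle inequality one only has $\sum_t\bigl|\Pr[I_t=1]-\Pr[I_{t-1}=1]\bigr|\le \sum_t\Pr[I_t\neq I_{t-1}]$, so the column sum is \emph{at most} the expected flip count. That inequality points the wrong way for a lower bound on the Lipschitz constant: exhibiting many flips does not by itself force a large column sum. You would additionally need that, for each fixed $v$, the random variable $\mathbb{1}[e\in M^\star(S^\sigma_v)]-\mathbb{1}[e\in M^\star(S^\sigma_v\setminus\{v\})]$ has a constant sign over all $\sigma$ where it is nonzero; you do not address this, and it is exactly the kind of cancellation your ``independent pivots across scales'' heuristic would have to rule out.

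Second, and more consequential, your dismissal of paths is wrong. The paper's entire proof takes $G$ to be the path $v_1,\dots,v_n$ with all weights equal to $1$, perturbing only the edge $f=(v_2,v_3)$ by $\delta$. The point is that on a segment $\{v_1,\dots,v_b\}$, the edge $f$ belongs to some maximum matching iff $b$ is odd; hence whether the $\delta$ on $f$ is felt by $\nu$ toggles with the parity of the right end of the component containing $f$. For every even $i\ge 4$ the paper shows $x_{\sigma,v_i}-x'_{\sigma,v_i}\in\{0,\delta\}$ (so the sign is consistent, which is precisely what your argument was missing) and that the probability this difference is nonzero is $\Omega(1/i)$. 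Summing over even $i$ yields $\|\SHAPMATCH(w)-\SHAPMATCH(w')\|_1=\Omega(\delta\log n)$ directly from the definition of the Lipschitz constant, with no Jacobians, no genericity, and no multi-scale gadget. Your $O(1)$ preliminary calculations presumably sampled weights far from the all-ones point; the large Lipschitz constant lives right at (and in the cell adjacent to) that degenerate weight vector where many matchings tie, so restricting attention to ``typical'' generic weights will miss it.
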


\begin{theorem}\label{thm:shap_mst}
The Shapley value of the minimum spanning tree game has a Lipschitz constant of $2$.
\end{theorem}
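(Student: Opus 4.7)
The plan is to exploit the linearity of the Shapley operator in the characteristic function, decompose the Shapley value as a sum of per-edge contributions $s(w) = \sum_e w_e\, s^{(e)}$, and prove the combinatorial bound $\|s^{(e)}\|_1 \le 2$ edge by edge. Fixing a tiebreaking rule so that Kruskal's algorithm returns a unique MST on every induced subgraph, and setting $\nu_e(S) := \mathbf{1}[e \in \mathrm{MST}(G[S \cup \{r\}])]$, we have $\nu(S) = \sum_e w_e\, \nu_e(S)$ whenever $w$ lies in an open region of weight space where the edge ordering is fixed, with $\nu_e$ independent of $w$ on that region. Linearity of the Shapley value in the characteristic function then yields $s(w) = \sum_e w_e\, s^{(e)}$, where $s^{(e)}$ is the Shapley vector of the $\{0,1\}$-valued game $(V, \nu_e)$; hence on the region, $\|s(w) - s(w')\|_1 \le (\max_e \|s^{(e)}\|_1)\,\|w-w'\|_1$.

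The core step is the bound $\|s^{(e)}\|_1 \le 2$ for every $e$. Fix a permutation $\sigma$ and track the binary sequence $\nu_e(S_0^\sigma), \dots, \nu_e(S_n^\sigma)$ along the prefixes of $\sigma$. By the cycle property, $\nu_e(S_k^\sigma) = 1$ iff both endpoints of $e$ lie in $S_k^\sigma \cup \{r\}$ and no path between them in $G[S_k^\sigma \cup \{r\}]$ uses only edges strictly cheaper than $e$ (under the fixed tiebreaking). The first condition becomes true at most once along $\sigma$ (and stays true thereafter); the second condition is true initially and, once it turns false because a cheap alternative path is created, it remains false because adding vertices cannot destroy an existing path. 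Consequently the sequence $(\nu_e(S_k^\sigma))_k$ has the shape $0\cdots 0 \to 1\cdots 1 \to 0\cdots 0$ with at most one upward and one downward jump, giving $\sum_v |x_{\sigma,v}^{(e)}| \le 2$. Averaging over $\sigma$ and applying the triangle inequality yields $\sum_v |s_v^{(e)}| \le 2$.

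Finally, $\nu(S) = \min_T \sum_{e \in T} w_e$ taken over spanning trees of $G[S \cup \{r\}]$ is a continuous, piecewise linear function of $w$, and so is $s$; hence the per-region $2$-Lipschitz bound extends globally, since along any straight segment in $w$-space the segment crosses only finitely many region boundaries and continuity rules out any jump at a boundary. The main delicate step is the cycle-property analysis giving the $2$-per-edge bound, with a little care needed for edges incident to $r$ (trivially handled because $r$ is always present in $S \cup \{r\}$) and for the tiebreaking convention. Everything else is routine bookkeeping via the linearity of the Shapley operator and the piecewise linearity of the MST weight.
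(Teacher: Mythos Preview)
Your argument is correct. Both your proof and the paper's hinge on the same combinatorial monotonicity: once an edge $f$ drops out of the MST of $G[S\cup\{r\}]$, enlarging $S$ can never bring it back (equivalently, the sequence $k\mapsto \nu_f(S_k^\sigma)$ is unimodal along any permutation). You extract this from the cycle property; the paper proves it as Lemma~\ref{lem:mstshapdiff} via a tree-exchange argument applied to the quantity $\OPT(S\cup\{r\},w+\delta\bm{1}_f)-\OPT(S\cup\{r\},w)$ for arbitrary $\delta>0$.

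The packaging differs. The paper works directly with the marginal differences $x'_{\sigma,j}-x_{\sigma,j}$ for a single perturbed edge $f$, shows that this sequence has one nonnegative term at most $\delta$ followed by nonpositive terms summing to at least $-\delta$, and concludes $\sum_j|x'_{\sigma,j}-x_{\sigma,j}|\le 2\delta$; Lemma~\ref{lem:seeoneelement} then extends this to arbitrary pairs $w,w'$. Your route instead uses the linearity of the Shapley operator to write $s(w)=\sum_e w_e\,s^{(e)}$ on each open region of constant edge order, bounds $\|s^{(e)}\|_1\le 2$ via the unimodal shape, and then glues regions together by continuity of $w\mapsto \OPT(S,w)$. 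Your decomposition is arguably more conceptual and makes the constant $2$ transparent (one up-step, one down-step), at the cost of the extra bookkeeping for ties and region boundaries; the paper's Lemma~\ref{lem:mstshapdiff} handles all $\delta$ uniformly and avoids the genericity/continuity discussion, at the cost of a slightly heavier exchange argument. The two are close cousins rather than genuinely distinct proofs.
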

Notably, the computation of the Shapley value for the minimum spanning tree game is \#$\mathsf{P}$-hard~\cite{ando2012computation}. The result of Theorem~\ref{thm:shap_mst} is particularly interesting because it shows a value that is computationally difficult to calculate may still have a small Lipschitz constant.

\subsection{Related Work}

\subsubsection{Optimization Games.}
The history of optimization games begins with the \emph{assignment game} proposed by Shapley and Shubik~\cite{shapley1971assignment}. They showed that the core of the assignment game is represented by the optimal solution of dual linear programming and is always non-empty.
Deng, Ibaraki, and Nagamochi~\cite{deng1999algorithmic} defined a class of games in which the characteristic function is represented by integer programming, and discussed the core structures of such games. In particular, for the matching game, they proved that the core is not always non-empty, but that the core non-emptiness problem, core membership problem, and that of generating a vector in the core if it is non-empty are solvable in polynomial time.
Aziz and de Keijzer~\cite{aziz2013shapley} proved that computing the Shapley value of the matching game is \#$\mathsf{P}$-hard.
Additionally, studies on the generalization of matching games such as the \emph{hypergraph matching game}~\cite{conitzer2006complexity,deng1999algorithmic,kumabe2020hyper} and \emph{$b$-matching game}~\cite{biro2018stable,kumabe2020b,sotomayor1992multiple,xiao2021approximate}, are being conducted.
For a more detailed survey of the algorithmic aspects of matching games, see~\cite{benedek2023complexity}.

The minimum spanning tree game was proposed by Claus and Kleitman~\cite{claus1973cost}. Bird~\cite{bird1976cost} later proposed an allocation defined as follows: Regarding the minimum spanning tree of a given graph as a rooted tree rooted at $r$, each agent corresponding to a vertex $v$ is allocated a cost equal to the weight of the edge from $v$ to its parent. Granot and Huberman~\cite{granot1981minimum} proved this allocation is in the core. Unfortunately, it is not (Lipschitz) continuous.
Faigle et al.~\cite{faigle1997complexity} proved that the core membership problem for the minimum spanning tree game is $\mathsf{coNP}$-hard.
Ando~\cite{ando2012computation} proved that the Shapley value of the minimum spanning tree game is \#$\mathsf{P}$-hard.

The concept of an approximate core was introduced by Faigle and Kern~\cite{faigle1993some} as a useful solution for games where the core is empty. Their work discussed the approximate core allocations for several optimization games, including the matching game. In particular, for the matching game, they constructed a $\frac{2}{3}$-core allocation based on LP relaxation.
Subsequently, extensive studies have been conducted to determine the best $\alpha$ for which the $\alpha$-core is always non-empty in various optimization games in which the core can be empty, such as the traveling salesman game~\cite{faigle1998approximately,faigle1993some} and bin-packing games ~\cite{faigle1993some,faigle1998approximate,kern2012integrality,kuipers1998bin,qiu2016approximate,woeginger1995rate}.
Recently, following the rediscovery of Faigle and Kern's results by Vazirani~\cite{vazirani2022general}, approximate cores have been derived for several optimization games such as the $b$-matching game~\cite{xiao2021approximate} and the edge cover game~\cite{lu2023approximate}.


Notably, in the definition by Faigle and Kern~\cite{faigle1993some} and in subsequent papers, the right-hand side of Equations~\eqref{eq:welfare_approx} or~\eqref{eq:cost_approx} being $(1 \pm \epsilon)\nu(S)$ is referred to as the $\epsilon$-core. However, following the conventions in the field of discrete optimization and the study by Vazirani~\cite{vazirani2022general} and subsequent studies ~\cite{lu2023approximate,xiao2021approximate}, we adopt the definitions of Equations~\eqref{eq:welfare_approx} and~\eqref{eq:cost_approx}. An $\epsilon$-core in our definition corresponds to a $|1-\epsilon|$-core in Faigle and Kern's definition.

\subsubsection{Lipschitz Continuity of Discrete Algorithms.}

Inspired by recent studies on the average sensitivity~\cite{hara2023average,kumabe22average,peng2020average,varma2023average,yoshida2022average} for \emph{unweighted} discrete optimization problems, Kumabe and Yoshida~\cite{kumabe2023lipschitz} introduced the Lipschitz continuity of a randomized algorithm $\mathcal{A}$ for \emph{weighted} discrete optimization problems as follows:
\begin{align}
    \sup_{\substack{w,w' \in \mathbb{R}_{\geq 0}^V,\\w\neq w'}}\frac{\min_{\mathcal{D}\in \Pi(\mathcal{A}(G,w),\mathcal{A}(G,w'))}\E_{(S,S')\sim \mathcal{D}}\left[d_{\mathrm{w}}\left((S,w),(S',w')\right)\right]}{\|w-w'\|_1},\label{eq:lip_discrete}
\end{align}
where $\mathcal{A}(G,w)$ represents the output distribution of algorithm $\mathcal{A}$ for input $G$ with weight vector $w\in \mathbb{R}_{\geq 0}^{V}$, $\Pi(X,X')$ denotes the set of all joint distributions for random variables $X$ and $X'$, and
\begin{align*}
    d_{\mathrm{w}}((S,w),(S',w')) = \left\|\sum_{v \in S}w_v \bm{1}_v - \sum_{v \in S'}w'_v \bm{1}_v\right\|_1 
    = \sum_{v \in S \cap S'}|w_v - w'_v| + \sum_{v \in S \setminus S'}w_v + \sum_{v \in S' \setminus S}w'_v.
\end{align*}
They also proposed algorithms with small Lipschitz constants for the minimum spanning tree, shortest path, and maximum weight matching problems. In a subsequent study~\cite{kumabe2023cover}, the authors obtained algorithms with small Lipschitz constants for the minimum weight vertex cover, minimum weight set cover, and feedback vertex set problems.

In discrete optimization, the outputs of the algorithms are discrete sets. Thus, deterministic algorithms cannot be Lipschitz continuous; they consider randomized algorithms and adopt the earth mover's distance in the numerator of~\eqref{eq:lip_discrete}. In contrast, the outputs of our algorithms are allocations of continuous values, allowing deterministic algorithms to be Lipschitz continuous and randomized algorithms to be derandomized by taking the expectation. Therefore, in our setting, we can use a simple definition of the deterministic algorithms expressed in Equation~\eqref{eq:lip_game}.



\subsection{Organization}

The remainder of this paper is organized as follows: In Section~\ref{sec:basic}, we provide several useful lemmas to analyze Lipschitz continuity of allocations. In Section~\ref{sec:matching}, we prove Theorem~\ref{thm:matching} by providing a Lipschitz continuous polynomial-time algorithm that returns an approximate core allocation to the matching game. In Section~\ref{sec:mst}, we prove Theorem~\ref{thm:mst} by providing a Lipschitz continuous polynomial-time algorithm that returns an approximate core allocation for the minimum spanning tree game. 
Finally, in Section~\ref{sec:shapley}, we prove Theorems~\ref{thm:shap_matching} and~\ref{thm:shap_mst} to discuss the Lipschitz continuity of the Shapley value. 

\section{Basic Facts}\label{sec:basic}

In this section, we provide useful lemmas to analyze the core approximability and Lipschitz continuity of our allocations.
The following lemma is useful for obtaining the Lipschitz constant: 
We omit this proof because it is similar to Lemma~1.7 of~\cite{kumabe2023lipschitz}.
\begin{lemma}\label{lem:seeoneelement}
Let $(G=(V,E),w)$ be an optimization game and $\mathcal{A}$ be an algorithm that takes a pair $(G,w)$ and outputs an allocation.
Suppose that there exist some $c>0$ and $L>0$ such that
\begin{align*}
    \left\|(\mathcal{A}(G,w), \mathcal{A}(G,w+\delta \bm{1}_e)\right\|_1\leq \delta L
\end{align*}
holds for any $e\in E$, $w\in \mathbb{R}_{\geq 0}^E$, and $\delta > 0$ with either $\delta\leq c\cdot w_e$ or $w_e=0$, where $\bm{1}_e$ represents the characteristic vector of $e$.
Then, $\mathcal{A}$ has a Lipschitz constant $L$.
\end{lemma}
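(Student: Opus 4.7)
The plan is to bootstrap the single-edge hypothesis to the full $\ell_1$ Lipschitz bound by first handling a change in one coordinate and then telescoping over all edges. Concretely, I will first establish the intermediate claim that for every $w \in \mathbb{R}_{\geq 0}^E$, every edge $e \in E$, and every target value $v \geq 0$, if $w^{(v)}$ denotes the vector obtained from $w$ by replacing $w_e$ with $v$, then $\|\mathcal{A}(G, w) - \mathcal{A}(G, w^{(v)})\|_1 \leq L\,|w_e - v|$. Given this claim, fixing an enumeration $e_1, \dots, e_m$ of $E$ and setting $w^{(0)} = w$ together with $w^{(k)}$ obtained from $w^{(k-1)}$ by overwriting its $e_k$-coordinate with $w'_{e_k}$, the triangle inequality yields $\|\mathcal{A}(G, w) - \mathcal{A}(G, w')\|_1 \leq \sum_{k=1}^m L\,|w_{e_k} - w'_{e_k}| = L\,\|w - w'\|_1$, which is exactly the required Lipschitz bound.

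To prove the single-edge claim I split into the cases $v > w_e$, $w_e > v = 0$, and $w_e > v > 0$. When $v > w_e$ and $w_e > 0$, I interpolate along the geometric schedule $a_0 = w_e$ and $a_{i+1} = \min\{v, (1+c) a_i\}$, which reaches $v$ in at most $\lceil \log_{1+c}(v/w_e) \rceil + 1$ steps while guaranteeing that each increment $a_{i+1} - a_i$ is at most $c \cdot a_i$; applying the hypothesis at each step and summing telescopically gives $\|\mathcal{A}(G, w) - \mathcal{A}(G, w^{(v)})\|_1 \leq L \sum_i (a_{i+1} - a_i) = L(v - w_e)$. When $v > w_e$ and $w_e = 0$, the hypothesis applies directly with $\delta = v$ via the $w_e = 0$ clause. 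When $w_e > v = 0$, applying the hypothesis to the vector $w^{(0)}$ with $\delta = w_e$ (again invoking the $w_e = 0$ clause) yields the desired bound $L\,w_e$.

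The remaining case $w_e > v > 0$ is where the main obstacle lies, since the hypothesis is stated only for positive perturbations $w \mapsto w + \delta \bm{1}_e$ and not for decreases. I would address this by re-reading the hypothesis from the lower endpoint of each small step: to decrease along coordinate $e$ from a current value $u$ down to $u'$, apply the hypothesis at the vector whose $e$-coordinate is $u'$ with perturbation $\delta = u - u'$; this is valid provided $u - u' \leq c\,u'$, i.e., $u' \geq u/(1+c)$. This constraint forces a geometric (rather than arithmetic) schedule on the way down, but still reaches any target $v > 0$ in finitely many steps with the individual step sizes summing to exactly $w_e - v$; telescoping once more produces the bound $L(w_e - v)$. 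Combining all three cases establishes the single-edge claim and hence, via the coordinate-by-coordinate telescoping outlined above, the lemma.
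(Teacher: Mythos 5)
Your proof is correct. The paper explicitly omits this proof (referring to the analogous Lemma 1.7 of Kumabe and Yoshida, 2023), but your argument is exactly the standard bootstrapping one would expect there: first establish the single-coordinate version by covering the interval between $w_e$ and the target with a geometric schedule of small steps each of which falls under the hypothesis (reading the hypothesis from the lower endpoint to handle decreases, and invoking the $w_e=0$ clause to reach or leave zero), then telescope coordinate by coordinate. All three of your cases check out, the geometric schedules terminate in finitely many steps, the step sizes sum exactly to the coordinate difference so the constant $L$ is not degraded, and the outer telescoping then yields $L\|w-w'\|_1$.
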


In the analysis of the core approximability and Lipschitz continuity of allocations, we can simplify the discussion by ignoring the constraint $\sum_{v\in V}x_v = \nu(V)$ for a grand coalition. Therefore, we require a method to obtain an allocation with a bounded Lipschitz constant from vectors that satisfy only the constraints for partial coalitions with a bounded Lipschitz constant. 
To achieve this, we require a mild assumption in the game. An optimization game $(G=(V,E),w)$ is said to be \emph{reasonable} if the inequality $|\nu(V,w)-\nu(V,w')|\leq \|w-w'\|_1$ holds for all $w, w' \in \mathbb{R}{\geq 0}^{E}$. Notably, this is a fair assumption. For instance, games defined for optimization problems in the form of
\begin{align*}
    \text{$\max$ or $\min$} \sum_{e\in X}w_e, \text{ subject to } X\in \mathcal{F}
\end{align*}
for $\mathcal{F}\subseteq 2^E$ are all reasonable.
The next lemma applies to welfare allocation games, such as the matching game.
The proofs of the next two lemmas are given in Appendix~\ref{sec:omitted}.
\begin{lemma}\label{lem:benri_welfare}
Let $D\geq 1$. Let $\mathcal{A}$ be an algorithm that takes a weight vector $w\in \mathbb{R}_{\geq 0}^{E}$ and returns an allocation $\mathcal{A}(w)\in \mathbb{R}_{\geq 0}^{V}$ for a reasonable welfare allocation game. Assume $\mathcal{A}$ satisfies $\|\mathcal{A}(w)\|_1\leq D\nu(V,w)$, $\sum_{v\in S}\mathcal{A}(w)_v\geq \nu(S,w)$ for all weight vector $w$ and $S \subseteq V$, and $\|\mathcal{A}(w)-\mathcal{A}(w')\|_1\leq L\|w-w'\|_1$ for all two weight vectors $w$ and $w'$.
Then, there is an algorithm that returns $\frac{1}{D}$-approximate core allocation with Lipschitz constant $2L+1$.
\end{lemma}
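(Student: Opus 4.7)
The plan is to construct the new algorithm $\mathcal{A}'$ by first scaling down $\mathcal{A}$ by a factor of $D$ and then correcting the grand-coalition constraint by dumping the remaining slack on a single arbitrarily chosen agent. Concretely, fix any vertex $v^\star\in V$ once and for all, and for each weight vector $w$ define $\mathcal{A}'(w)_v = \mathcal{A}(w)_v/D$ for $v\neq v^\star$ and $\mathcal{A}'(w)_{v^\star} = \mathcal{A}(w)_{v^\star}/D + \Delta(w)$, where $\Delta(w) := \nu(V,w) - \|\mathcal{A}(w)\|_1/D$. The assumption $\|\mathcal{A}(w)\|_1\leq D\nu(V,w)$ guarantees $\Delta(w)\geq 0$, so $\mathcal{A}'(w)\in \mathbb{R}_{\geq 0}^V$.

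Next I would verify the three required properties. For the grand coalition, $\sum_{v\in V}\mathcal{A}'(w)_v = \|\mathcal{A}(w)\|_1/D + \Delta(w) = \nu(V,w)$ by construction. For any proper $S\subsetneq V$, because $\Delta(w)\geq 0$ and $v^\star$ only contributes additional nonnegative mass, $\sum_{v\in S}\mathcal{A}'(w)_v \geq \frac{1}{D}\sum_{v\in S}\mathcal{A}(w)_v \geq \frac{1}{D}\nu(S,w)$, using the assumed partial-coalition inequality for $\mathcal{A}$. This shows $\mathcal{A}'(w)$ is a $\tfrac{1}{D}$-approximate core allocation.

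The main step is bounding the Lipschitz constant. For two weight vectors $w,w'$, I would split the change into the scaled part and the correction term:
\begin{align*}
\|\mathcal{A}'(w)-\mathcal{A}'(w')\|_1 \;\leq\; \tfrac{1}{D}\|\mathcal{A}(w)-\mathcal{A}(w')\|_1 + |\Delta(w)-\Delta(w')|.
\end{align*}
The first term is at most $(L/D)\|w-w'\|_1$ by assumption. For $|\Delta(w)-\Delta(w')|$, the reasonableness hypothesis gives $|\nu(V,w)-\nu(V,w')|\leq \|w-w'\|_1$, and the $\mathcal{A}$-Lipschitz bound gives $\bigl|\|\mathcal{A}(w)\|_1 - \|\mathcal{A}(w')\|_1\bigr|/D \leq (L/D)\|w-w'\|_1$, so $|\Delta(w)-\Delta(w')| \leq (1+L/D)\|w-w'\|_1$. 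Summing yields the bound $(1+2L/D)\|w-w'\|_1 \leq (2L+1)\|w-w'\|_1$ since $D\geq 1$.

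I do not expect any real obstacle here: once the right ``dump the surplus on a fixed vertex'' construction is chosen, the verifications are routine, and the only non-trivial ingredient is the reasonableness assumption, which is precisely what makes $\Delta(w)$ a $1$-Lipschitz function of $w$. The mildly delicate point to keep in mind is that $v^\star$ must be chosen independently of $w$; a data-dependent choice would destroy the Lipschitz bound because a switch of the designated vertex could shift a nonzero mass discontinuously.
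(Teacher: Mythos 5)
Your proof is correct, but it uses a genuinely different construction from the paper's. The paper defines the corrected allocation by \emph{rescaling}: $x(w) = \frac{\nu(V,w)}{\|\mathcal{A}(w)\|_1}\mathcal{A}(w)$, so the grand-coalition constraint is met multiplicatively; its Lipschitz bound then follows from a three-term telescoping decomposition using that $\frac{\nu(V,w)}{\|\mathcal{A}(w)\|_1}\leq 1$ (which uses the assumed inequality with $S=V$, giving $\|\mathcal{A}(w)\|_1\geq \nu(V,w)$). You instead \emph{translate}: scale by $1/D$ and dump the slack $\Delta(w)$ on a fixed vertex $v^\star$. Both yield the $(2L+1)$ bound; in fact your version gives the slightly sharper $1+2L/D$. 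One small advantage of your route is that it avoids dividing by $\|\mathcal{A}(w)\|_1$, which in the paper's form is implicitly assumed nonzero. The trade-off is that your allocation singles out an arbitrary agent and so is not symmetric under permutations of $V$; you correctly flagged that $v^\star$ must be chosen independently of $w$, which is exactly what keeps the Lipschitz argument valid. Both proofs invoke the reasonableness hypothesis $|\nu(V,w)-\nu(V,w')|\leq\|w-w'\|_1$ in the same place, as the only source of control over the grand-coalition value.
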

The next lemma applies to cost allocation games, such as the minimum spanning tree game.
\begin{lemma}\label{lem:benri_cost}
Let $D\geq 1$. Let $\mathcal{A}$ be an algorithm that takes a weight vector $w\in \mathbb{R}_{\geq 0}^{E}$ and returns an allocation $\mathcal{A}(w)\in \mathbb{R}_{\geq 0}^{V}$ for a reasonable cost allocation game. Assume $\mathcal{A}$ satisfies $\|\mathcal{A}(w)\|_1\geq \nu(V,w)$, $\sum_{v\in S}\mathcal{A}(w)_v\leq D\nu(S,w)$ for all weight vector $w$, and $\|\mathcal{A}(w)-\mathcal{A}(w')\|_1\leq L\|w-w'\|_1$ for all two weight vectors $w$ and $w'$.
Then, there is an algorithm that returns $D$-approximate core allocation with Lipschitz constant $2L+1$.
\end{lemma}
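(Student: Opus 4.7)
The plan is to mirror the construction used in the proof of Lemma~\ref{lem:benri_welfare}: rescale $\mathcal{A}(w)$ so that the sum of allocations over the grand coalition becomes exactly $\nu(V,w)$, and then verify both the $D$-approximate core property and the Lipschitz bound for the rescaled allocation.

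Specifically, I would set
\[
x(w) = \frac{\nu(V,w)}{\|\mathcal{A}(w)\|_1}\mathcal{A}(w).
\]
The hypothesis $\|\mathcal{A}(w)\|_1 \geq \nu(V,w)$ forces the scalar $\nu(V,w)/\|\mathcal{A}(w)\|_1$ into $[0,1]$, so $\|x(w)\|_1 = \nu(V,w)$, and for every $S \subsetneq V$,
\[
\sum_{v \in S} x(w)_v = \frac{\nu(V,w)}{\|\mathcal{A}(w)\|_1}\sum_{v \in S}\mathcal{A}(w)_v \leq \sum_{v \in S}\mathcal{A}(w)_v \leq D\nu(S,w),
\]
which gives the $D$-approximate core property. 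The direction of this inequality works out precisely because multiplying by a scalar at most $1$ weakens an upper bound; this is exactly where the assumption $\|\mathcal{A}(w)\|_1 \geq \nu(V,w)$ (rather than the opposite direction used in the welfare case) becomes essential.

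For the Lipschitz bound, I would control $\|x(w) - x(w')\|_1$ by inserting two intermediate vectors, $\frac{\nu(V,w)}{\|\mathcal{A}(w)\|_1}\mathcal{A}(w')$ and $\frac{\nu(V,w)}{\|\mathcal{A}(w')\|_1}\mathcal{A}(w')$, and applying the triangle inequality. The first resulting piece equals $\frac{\nu(V,w)}{\|\mathcal{A}(w)\|_1}\|\mathcal{A}(w) - \mathcal{A}(w')\|_1$, and is at most $L\|w - w'\|_1$ since the scalar is at most $1$ and $\mathcal{A}$ is $L$-Lipschitz. The second piece collapses to $\frac{\nu(V,w)}{\|\mathcal{A}(w)\|_1}\bigl|\|\mathcal{A}(w')\|_1 - \|\mathcal{A}(w)\|_1\bigr|$, which is likewise bounded by $L\|w - w'\|_1$ via the reverse triangle inequality applied to $\mathcal{A}(w) - \mathcal{A}(w')$. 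The third piece simplifies to $|\nu(V,w) - \nu(V,w')|$, which is bounded by $\|w-w'\|_1$ by the reasonableness of the game. Summing the three contributions yields the claimed $(2L+1)\|w-w'\|_1$ bound.

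I do not anticipate any real obstacle: the argument is an essentially verbatim adaptation of the proof of Lemma~\ref{lem:benri_welfare}, with the sole conceptual difference being that all of the core inequalities are reversed, so the same rescaling by a factor in $[0,1]$ now produces an upper-bound-style guarantee appropriate for cost games. The only edge case worth noting is $\|\mathcal{A}(w)\|_1 = 0$, which by the hypothesis forces $\nu(V,w) = 0$ and allows us to simply set $x(w) = 0$ without affecting any of the bounds.
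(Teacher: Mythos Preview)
Your proposal is correct and follows essentially the same approach as the paper: the same rescaling $x(w)=\frac{\nu(V,w)}{\|\mathcal{A}(w)\|_1}\mathcal{A}(w)$, the same verification of the $D$-approximate core using that the scaling factor lies in $[0,1]$, and the same three-term triangle-inequality decomposition for the Lipschitz bound. Your explicit handling of the degenerate case $\|\mathcal{A}(w)\|_1=0$ is a small addition the paper leaves implicit.
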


\section{Matching Game}\label{sec:matching}
In this section, we prove Theorem~\ref{thm:matching} by giving a Lipschitz continuous algorithm that returns an approximate core allocation of the matching game.
We obtain the proof by constructing an algorithm that satisfies the assumptions in Lemma~\ref{lem:benri_welfare}. Specifically, we prove that an algorithm that returns a vector represented by
\begin{align}
\int_{0}^{1}\Call{MatchingGame}{G,w,b,\alpha}\mathrm{d}b\label{eq:match_out}
\end{align}
satisfies the assumptions of Lemma~\ref{lem:benri_welfare}, where the procedure \Call{MatchingGame}{} is provided in Algorithm~\ref{alg:matching}.
We give a deterministic algorithm to compute this integral in Section~\ref{sec:match_combine}.

\Call{MatchingGame}{$G,w,b,\alpha$} first rounds each edge weight $w_e$ to a value $\widehat{w}_e$ that is proportional to a power of $\alpha$, where the proportionality constant is determined by $b$. Thereafter, it sorts the edges in descending order of $\widehat{w}_e$ and greedily selects them to form maximal matching $M$. Finally, for each edge $e\in M$, the algorithm allocates $\widehat{w}_e$ to both endpoints of the edges in $M$.

\begin{algorithm}[t!]
\caption{Lipschitz continuous allocation for the matching game}\label{alg:matching}
\Procedure{\emph{\Call{MatchingGame}{$G,w,b,\alpha$}}}{
    \KwIn{A graph $G=(V,E)$, where the edge set is indexed with integers in $\{1,2,\ldots,|E|\}$, a weight vector $w \in \mathbb{R}_{\geq 0}^E$, $b\in [0,1]$, and $\alpha\in (1,2]$.}
    For each $e\in E$ with $w_e>0$, let $\widehat{w}_e\leftarrow \alpha^{i_e+1+b}$, where $i_e$ is the unique integer such that $\alpha^{i_e+b}\leq w_e < \alpha^{i_e+1+b}$\;
    $z\leftarrow 0^{V}$, $M\leftarrow \emptyset$\;
    \For{$e\in E$ in descending order of $w_e$, where ties are broken according to their indices}{\label{line:match_loop}
        \If{none of the two endpoints of $e$ are covered by $M$}{
            Add $e$ to $M$\;
            $z_v\leftarrow \widehat{w}_e$ for each endpoint $v$ of $e$\;
        }
    }
    \Return $z$\;
}
\end{algorithm}

\subsection{Core Approximability}

The proofs of the following two lemmas for the core approximability analysis of Algorithm~\ref{alg:matching} are relatively straightforward.
\begin{lemma}\label{lem:match_core1}
We have $\|z\|_1\leq 2\alpha\OPT(V,w)$.
\end{lemma}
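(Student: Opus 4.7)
The plan is to unfold the definition of $z$ and bound each contribution against the optimal matching weight. Concretely, every vertex receives a nonzero value only if it is covered by some edge $e \in M$, in which case it is assigned exactly $\widehat{w}_e$. Since $M$ is a matching (each vertex is covered at most once), this gives the clean identity
\begin{align*}
    \|z\|_1 = \sum_{v \in V} z_v = 2 \sum_{e \in M} \widehat{w}_e.
\end{align*}

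Next I would compare $\widehat{w}_e$ to $w_e$ using the rounding rule. By definition $\widehat{w}_e = \alpha^{i_e+1+b}$ where $\alpha^{i_e+b} \leq w_e$, so $\widehat{w}_e = \alpha \cdot \alpha^{i_e+b} \leq \alpha w_e$. Thus $\sum_{e \in M}\widehat{w}_e \leq \alpha \sum_{e \in M} w_e$.

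Finally, because $M$ is a matching in $G[V]$ (the greedy loop explicitly maintains this invariant), its total $w$-weight is at most the maximum matching weight, i.e.\ $\sum_{e \in M} w_e \leq \OPT(V,w)$. Chaining the three bounds gives $\|z\|_1 \leq 2\alpha \OPT(V,w)$, as required.

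There is no real obstacle here: the only subtlety is to remember that the greedy loop orders edges by $w_e$ and not by $\widehat{w}_e$, but for this lemma the ordering is irrelevant — all that matters is that the output $M$ is a matching and that $\widehat{w}_e$ is at most $\alpha w_e$ edgewise. (The ordering will matter in the companion lower-bound lemma, where maximality of $M$ has to be invoked.)
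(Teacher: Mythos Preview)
Your proof is correct and follows essentially the same route as the paper: express $\|z\|_1 = 2\sum_{e\in M}\widehat{w}_e$, use $\widehat{w}_e \leq \alpha w_e$ from the rounding, and bound $\sum_{e\in M} w_e \leq \OPT(V,w)$ since $M$ is a matching. Your additional remark about the irrelevance of the sorting order for this particular lemma is a nice clarification, though not needed for the argument itself.
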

\begin{proof}
Because $M$ is a matching of $G$, we have $\sum_{e\in M}w_e\leq \OPT(V,w)$. Because the modified weight $\widehat{w}_e$ of each edge $e$ in $M$ contributes twice to $\|z\|_1$, we obtain 
\[
    \|z\|_1 = 2\sum_{e\in M}\widehat{w}_e\leq 2\alpha \sum_{e\in M}w_e\leq 2\alpha\OPT(V,w).
    \qedhere
\]
\end{proof}

\begin{lemma}\label{lem:match_core2}
Let $S\subseteq V$. Then, we have $\sum_{v\in S}z_v\geq \OPT(S,w)$.
\end{lemma}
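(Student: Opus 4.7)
The plan is to exhibit, for each edge of some maximum weight matching $M^*$ in $G[S]$, an endpoint in $S$ whose allocation under $z$ already covers that edge's weight, and then conclude by an injective charging argument.

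First I would fix $M^*$ to be a maximum weight matching of $G[S]$ with respect to $w$, so that $\OPT(S,w) = \sum_{e^*\in M^*} w_{e^*}$. The central claim to establish is that for every $e^* = (u,v) \in M^*$, at least one of $z_u \geq w_{e^*}$ or $z_v \geq w_{e^*}$ holds. Two simple monotonicity facts about the rounding feed the argument: by construction $\widehat{w}_e > w_e$ whenever $w_e > 0$ (since $w_e < \alpha^{i_e+1+b} = \widehat{w}_e$), and $w_e \geq w_{e^*}$ implies $\widehat{w}_e \geq \widehat{w}_{e^*}$ because the bucket index $i_e$ is then at least $i_{e^*}$. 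I would then split on what the greedy loop at Line~\ref{line:match_loop} does with $e^*$. If $e^* \in M$, then by construction $z_u = z_v = \widehat{w}_{e^*} > w_{e^*}$. If $e^* \notin M$, then at the iteration that processes $e^*$, at least one endpoint, say $u$, is already covered by some $e \in M$ added earlier; since the loop visits edges in descending order of $w$, we have $w_e \geq w_{e^*}$, hence $z_u = \widehat{w}_e \geq \widehat{w}_{e^*} > w_{e^*}$ (the case $w_{e^*}=0$ is trivial because $z \geq 0$).

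Granted the claim, the lemma follows by a clean charging step: pick any map $\phi \colon M^* \to S$ with $\phi(e^*) \in \{u,v\}$ and $z_{\phi(e^*)} \geq w_{e^*}$, which is possible since both endpoints of every edge of $M^*$ lie in $S$. Because $M^*$ is a matching, its edges are vertex-disjoint, so $\phi$ is injective, and
\[
\sum_{v\in S} z_v \;\geq\; \sum_{e^* \in M^*} z_{\phi(e^*)} \;\geq\; \sum_{e^* \in M^*} w_{e^*} \;=\; \OPT(S,w),
\]
as desired. The only mildly subtle point I anticipate is in Case 2: the blocking edge $e$ need not lie inside $G[S]$ (its other endpoint may be outside $S$), but this causes no difficulty because $z$ is produced by the greedy run on all of $E$ and we only need the lower bound on $z_u$ for the endpoint $u \in S$ that is blocked.
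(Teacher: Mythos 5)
Your proof is correct and follows essentially the same approach as the paper: you establish that for every edge $e^*=(u,v)$ of a maximum matching of $G[S]$ at least one endpoint satisfies $z$-value at least $\widehat{w}_{e^*}\geq w_{e^*}$ (matched edges get $\widehat{w}_{e^*}$ on both endpoints; blocked edges are blocked by an earlier, hence heavier, edge), and then sum over the vertex-disjoint edges. The paper packages the same facts as $\widehat{w}_e\le z_u+z_v$ for every $e=(u,v)$ and sums $z_u+z_v$ over the matching rather than setting up an explicit injection $\phi$, but this is only a presentational difference.
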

\begin{proof}
Let $e=(u,v)\in E$.
When edge $e$ begins to be examined in the loop starting from Line~\ref{line:match_loop}, if at least one of $u$ or $v$ (say, $u$) is already covered by $M$, then we have $\widehat{w}_e \leq z_u$. If neither $u$ nor $v$ are covered, then edge $e$ is added to $M$, resulting in $\widehat{w}_e = z_u = z_v$.  
Therefore, $\widehat{w}_e\leq \max\{z_u,z_v\}\leq z_u+z_v$.
Let $M'$ be the maximum matching of $G[S]$. Then, we have
\[
    \sum_{v\in S}z_v\geq \sum_{(u,v)\in M'}(z_u+z_v)\geq \sum_{e\in M'}\widehat{w}_e \geq \sum_{e\in M'}w_e  = \OPT(S,w).
    \qedhere
\]
\end{proof}

\subsection{Lipschitz Continuity}
Let $G=(V,E)$ be a graph, $f\in E$, $\delta>0$, $b\in [0,1]$, and $\alpha>1$. 
We will bound
\begin{align}
    \frac{1}{\delta}\int_{0}^{1}\left\|\Call{MatchingGame}{G,w,b,\alpha} - \Call{MatchingGame}{G,w+\delta \bm{1}_f,b,\alpha}\right\|_1 \mathrm{d}b,\label{eq:match_int}
\end{align}
which is an upper bound on
\begin{align*}
    \frac{1}{\delta}\left\|\int_{0}^{1}\Call{MatchingGame}{G,w,b,\alpha}\mathrm{d}b - \int_{0}^{1}\Call{MatchingGame}{G,w+\delta \bm{1}_f,b,\alpha}\mathrm{d}b\right\|_1.
\end{align*}
From Lemma~\ref{lem:seeoneelement}, bounding~\eqref{eq:match_int} for $\delta \leq w_f$ or $w_f=0$ is sufficient to prove Lipschitz continuity.
We denote the value of $\widehat{w}$, $M$, and $z$ in \Call{MatchingGame}{$G,w,b,\alpha$} (resp., \Call{MatchingGame}{$G,w+\delta \bm{1}_f,b,\alpha$}) as $\widehat{w}$, $M$, and $z$ (resp., $\widehat{w}'$, $M'$, and $z'$).

When $\widehat{w}_f=\widehat{w}'_f$, \Call{MatchingGame}{$G,w,b,\alpha$} and \Call{MatchingGame}{$G,w+\delta \bm{1}_f,b,\alpha$} output the same vector. Assume otherwise.
In \Call{MatchingGame}{$G,w,b$}, edge $e_1$ coming \emph{before} edge $e_2$ refers to $e_1$ being considered before $e_2$ in the loop starting from Line~\ref{line:match_loop}, and is denoted as $e_1\prec_{\widehat{w}} e_2$. 
In other words, either $\widehat{w}_{e_1} > \widehat{w}_{e_2}$ or $\widehat{w}_{e_1} = \widehat{w}_{e_2}$ and the index of $e_1$ comes earlier than that of $e_2$. For $e_1\neq f\neq e_2$, the relations $e_1\prec_{\widehat{w}} e_2$ and $e_1\prec_{\widehat{w}'} e_2$ are equivalent. Thus, we simply denote this as $e_1\prec e_2$.
The following lemma forms the core of our Lipschitzness analysis:
\begin{lemma}\label{lem:match_diff}
Assume $\widehat{w}_f\neq \widehat{w}'_f$. Then, we have $\|z-z'\|_1\leq 2\widehat{w}'_{f}$.
\end{lemma}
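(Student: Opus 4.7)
The plan is to analyze the symmetric difference $M \triangle M'$ structurally, show that it consists of at most one short alternating path containing $f$, and then compute $\|z-z'\|_1$ by summing vertex-by-vertex contributions. I would first observe that since the rounding $w_e \mapsto \widehat{w}_e$ is nondecreasing in $w_e$ and $\widehat{w}_f \neq \widehat{w}'_f$ by hypothesis, we have $\widehat{w}'_f > \widehat{w}_f$; consequently $f$ moves strictly earlier in $\prec'$ while the relative ordering of every other pair of edges is preserved. From this I would deduce a monotonicity property: if $f \in M$ then $f \in M'$, because the prefix of $\prec'$ ending just before $f$ is a subset of the corresponding prefix of $\prec$, and in the old greedy none of those edges covered the endpoints of $f$. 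Hence the nontrivial case is $f \in M' \setminus M$, and the other cases leave either $M = M'$ or only a constant-size change concentrated at the endpoints of $f$.

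The heart of the argument is a \emph{joint lemma}: if $e_1 \in M \setminus M'$ and $e_2 \in M' \setminus M$ share a vertex, then $f \in \{e_1, e_2\}$. The proof is a short contradiction from the greedy invariant: $e_1 \in M$ and $e_2 \notin M$ force $e_1 \prec e_2$, while $e_2 \in M'$ and $e_1 \notin M'$ force $e_2 \prec' e_1$; if neither of $e_1, e_2$ were $f$ then $\prec$ and $\prec'$ would agree on $\{e_1, e_2\}$, a contradiction. A similar short greedy argument rules out isolated components of $M \triangle M'$ (an unopposed edge of $M \setminus M'$ would have both endpoints free in the final $M'$, so it would have been chosen by the new greedy, and vice versa). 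Since $f$ can act as the required joint edge for at most two consecutive joints along an alternating path, every component of $M \triangle M'$ must be either a length-$2$ path with $f$ at one end (and $f \in M' \setminus M$ by the monotonicity property) or a length-$3$ path with $f$ as its middle edge.

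The last structural ingredient is the weight bound $\widehat{w}_g \leq \widehat{w}'_f$ for every non-$f$ edge $g$ in the component: otherwise $g \prec' f$, and the endpoint of $g$ opposite the $f$-side would have to be covered in $M'$ by still another $M' \setminus M$ edge, extending the component beyond the allowed length and contradicting the structural claim. Granted this, the change in $z$ is confined to the three vertices of the $2$-path (or four vertices of the $3$-path), and a direct telescoping computation yields
\[
\|z-z'\|_1 = \widehat{w}'_f + (\widehat{w}'_f - \widehat{w}_g) + \widehat{w}_g = 2\widehat{w}'_f
\]
in the length-$2$ case and the analogous sum $(\widehat{w}'_f - \widehat{w}_{g_1}) + (\widehat{w}'_f - \widehat{w}_{g_2}) + \widehat{w}_{g_1} + \widehat{w}_{g_2} = 2\widehat{w}'_f$ in the length-$3$ case, while the trivial subcases give at most $2(\widehat{w}'_f - \widehat{w}_f)$. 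The main obstacle will be the structural analysis, in particular combining the joint lemma, the non-isolation argument, and the weight bound, each of which requires delicate simultaneous use of the greedy invariants under the two different orderings $\prec$ and $\prec'$.
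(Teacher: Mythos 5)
Your structural claim---that every component of $M\triangle M'$ is a length-$2$ path with $f$ at one end or a length-$3$ path with $f$ in the middle---is false, and so is the ``joint lemma'' you use to establish it. Concretely, take a path $v_0\text{--}v_1\text{--}\cdots\text{--}v_5$ with edges $e_i=(v_{i-1},v_i)$, $f=e_3$, and suppose the $\widehat{w}$-order (decreasing) is $e_2, e_4, e_1, e_5, e_3$. Then $M=\{e_2,e_4\}$. After the perturbation, the $\widehat{w}'$-order becomes $e_3, e_2, e_4, e_1, e_5$, so $M'=\{e_3,e_1,e_5\}$ and $M\triangle M'=\{e_1,\dots,e_5\}$ is a path of length $5$. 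In this example $e_2\in M\setminus M'$ and $e_1\in M'\setminus M$ share $v_1$, yet neither equals $f$, contradicting the joint lemma. The flaw in its ``proof'' is the step ``$e_2\in M'$ and $e_1\notin M'$ force $e_2\prec' e_1$'': $e_1$ can be excluded from $M'$ because its other endpoint is covered by yet another edge of $M'$, not because $e_2$ came earlier. The same phenomenon is exactly what lets the alternating component propagate to arbitrary length. Consequently, the weight bound and the explicit two-/three-vertex telescoping computation do not establish the lemma; one must sum the telescoping differences along a path or cycle of unbounded length (using only that, starting from $\widehat{w}'_f$ and tracing ``witnesses'' outward in each of the two directions from $f$, the edge weights form a decreasing sequence), which is what the paper's argument does.
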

\begin{proof}
For each edge $e \neq f$ such that $e\in M'\setminus M$ (resp., $e\in M\setminus M'$), edge $g$ is a \emph{witness} of $e$ if it is adjacent to $e$ in $M$ (resp., $M'$) and $g\prec_{\widehat{w}} e$ (resp., $g\prec_{\widehat{w}'} e$). Intuitively, a witness of $e$ is the edge that directly causes $e$ to be excluded from $M$ or $M'$.

From this definition, the witness of $e\in M'\setminus M$ belongs to $M\setminus M'$ and vice versa. Because $\prec$ is an ordering on $E\setminus \{f\}$, by tracing the witnesses from any edge in $M\triangle M'$, we will consequently arrive at $f$. This implies that as long as $M\neq M'$, the edges in $M\triangle M'$ form a single path or cycle including $f$.
Moreover, in this case, we have $f\in M'\setminus M$, because if we can add $f$ to $M$ in \Call{MatchingGame}{$G,w,b,\alpha$}, we could also add $f$ to $M'$ in \Call{MatchingGame}{$G,w+\delta \bm{1}_f,b,\alpha$}.

Let us now complete the proof. When $M=M'$, we have $\|z-z'\|_1 \leq 2(\widehat{w}'_f-\widehat{w}_f) \leq 2\widehat{w}'_f$.
Otherwise, we let $f=(u_0,v_0)$. Then, there exists a unique maximal sequence of vertices $(u_0,\dots, u_k)$ such that $(u_i,u_{i+1})\in M\triangle M'$ for all $i$ and $f\prec_{\widehat{w}'} (u_0,u_1)\prec (u_1,u_2)\prec \cdots \prec (u_{k-1},u_k)$, and a unique maximal sequence of vertices $(v_0,\dots, v_l)$ such that $(v_i,v_{i+1})\in M\triangle M'$ for all $i$ and $f\prec_{\widehat{w}'} (v_0,v_1)\prec (v_1,v_2)\prec \cdots \prec (v_{l-1},v_l)$ (if $M\triangle M'$ forms a cycle, then $(u_{k-1},u_{k})=(v_{l},v_{l-1})$). Now, we have
\begin{align*}
    &\|z-z'\|_1 \leq \sum_{i=0}^{k}|z_{u_i}-z'_{u_i}| + \sum_{i=0}^{l}|z_{v_i}-z'_{v_i}|\\
    & = \left(|\widehat{w}'_f-\widehat{w}_{(u_0,u_1)}|+\sum_{i=1}^{k}|\widehat{w}_{(u_{i-1},u_i)}-\widehat{w}_{(u_{i},u_{i+1})}|\right) 
    + \left(|\widehat{w}'_f-\widehat{w}_{(v_0,v_1)}|+\sum_{i=1}^{l}|\widehat{w}_{(v_{i-1},v_i)}-\widehat{w}_{(v_{i},v_{i+1})}|\right)\\
    & \leq 2\widehat{w}'_f,
\end{align*}
where the last inequality is from the fact that the sequences defined by $\left(\widehat{w}'_f, \widehat{w}_{(u_0,u_1)},\dots, \widehat{w}_{(u_{k-1},u_{k})}\right)$ and $\left(\widehat{w}'_f, \widehat{w}_{(v_0,v_1)},\dots, \widehat{w}_{(v_{l-1},v_{l})}\right)$ are both decreasing.
\end{proof}

The following lemma analyzes the probability that $\widehat{w}_f\neq \widehat{w}'_f$ happens.
\begin{lemma}
If $b$ is sampled uniformly from $[0,1]$, $\widehat{w}_f\neq \widehat{w}'_f$ happens with a probability of at most $\frac{\delta}{w_f\ln \alpha}$.
\end{lemma}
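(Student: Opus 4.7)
The plan is to reduce the event $\widehat{w}_f \neq \widehat{w}'_f$ to a question about whether a short interval on the real line contains an integer after a uniformly random shift by $b$, and then to compute the measure of the bad shifts directly.

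First, I will unwind the rounding rule. Since $\widehat{w}_f = \alpha^{i_f + 1 + b}$ and $\widehat{w}'_f = \alpha^{i'_f + 1 + b}$ with integer exponents, $\widehat{w}_f \neq \widehat{w}'_f$ is equivalent to $i_f \neq i'_f$. The defining inequality $\alpha^{i_f + b} \leq w_f < \alpha^{i_f + 1 + b}$, after taking $\log_\alpha$, becomes $i_f \leq \log_\alpha(w_f) - b < i_f + 1$, so $i_f = \lfloor \log_\alpha(w_f) - b \rfloor$, and similarly $i'_f = \lfloor \log_\alpha(w_f + \delta) - b \rfloor$. Because $\delta > 0$ gives $i'_f \geq i_f$, the two exponents differ precisely when the half-open interval $\bigl(\log_\alpha(w_f) - b,\; \log_\alpha(w_f + \delta) - b\bigr]$ contains an integer.

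Next, I will bound the measure of bad $b$'s. As $b$ ranges, this interval slides leftward at unit speed while its length stays fixed at $\ell := \log_\alpha\!\bigl(1 + \delta / w_f\bigr)$. For each integer $k$, the set of $b$'s placing $k$ inside the interval is itself an interval of length $\ell$. When $\ell \leq 1$ at most one such $k$ is relevant for $b \in [0,1]$, so the probability of a collision is at most $\ell$, and the elementary bound $\ln(1 + x) \leq x$ yields $\ell \leq \frac{\delta}{w_f \ln \alpha}$, matching the claim.

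The only subtlety is the regime $\ell > 1$, i.e.\ $\delta > (\alpha - 1) w_f$, where the sliding-interval argument no longer gives a bound below $1$. I will dispatch this by the inequality $\alpha - 1 \geq \ln \alpha$ for $\alpha > 1$: it implies $\frac{\delta}{w_f \ln \alpha} > \frac{\alpha - 1}{\ln \alpha} \geq 1$, so the trivial probability bound already suffices. I do not expect any real obstacle here; once the reduction to $i_f \neq i'_f$ and the floor representation are written down, everything else is one-dimensional measure bookkeeping.
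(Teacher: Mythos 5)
Your proof is correct and follows essentially the same route as the paper's: reduce $\widehat{w}_f \neq \widehat{w}'_f$ to a floor-function collision, observe that the bad set of $b$'s has measure $\log_\alpha(1+\delta/w_f)$, and apply $\ln(1+x)\leq x$. The one place you go beyond the paper is the explicit treatment of the regime $\ell>1$; the paper states the probability as an equality to $\log_\alpha(w_f+\delta)-\log_\alpha w_f$, which is literally true only when this quantity is less than $1$, but the inequality being claimed still holds trivially there (probability at most $1 \leq$ the bound), so this is a minor tightening of exposition rather than a gap in the paper.
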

\begin{proof}
$\widehat{w}_f\neq \widehat{w}'_f$ happens when there exists an integer $i$ with $w_f<\alpha^{i+b}\leq w_f+\delta$, indicating that $\floor{\log_\alpha w_f-b}\neq \floor{\log_\alpha w_f+\delta-b}$. This happens with probability 
\[
    \log_\alpha (w_f+\delta) - \log_\alpha w_f = \log_\alpha \left(1+\frac{\delta}{w_f}\right)\leq \frac{1}{\ln \alpha}\cdot \frac{\delta}{w_f}=\frac{\delta}{w_f \ln \alpha}.
    \qedhere
\]
\end{proof}
Now, we complete our Lipschitzness analysis.
\begin{lemma}\label{lem:match_lip}
We have
\begin{align*}
    \int_{0}^{1}\left\|\Call{MatchingGame}{G,w,b,\alpha} - \Call{MatchingGame}{G,w+\delta \bm{1}_f,b,\alpha}\right\|_1 \mathrm{d}b\leq \frac{12}{\alpha-1}\delta.
\end{align*}
\end{lemma}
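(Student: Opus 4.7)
The plan is to combine the pointwise bound from Lemma~\ref{lem:match_diff} with the probability bound just established, and then verify a short numerical inequality. By the reduction noted after Lemma~\ref{lem:seeoneelement}, it is enough to handle the two regimes $w_f=0$ and $0<\delta\leq w_f$.

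The first step is to notice that whenever $\widehat{w}_f(b)=\widehat{w}'_f(b)$, the two invocations of \Call{MatchingGame}{} see identical rounded weight vectors (since only the weight of $f$ has been perturbed) and therefore produce the same output. Hence the integrand in the statement vanishes outside the bad set $B:=\{b\in[0,1]:\widehat{w}_f(b)\neq\widehat{w}'_f(b)\}$, and it remains to integrate only over $B$.

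On $B$, Lemma~\ref{lem:match_diff} gives $\|z(b)-z'(b)\|_1\leq 2\widehat{w}'_f(b)$, and the rounding rule in Algorithm~\ref{alg:matching} gives $\widehat{w}'_f(b)\leq\alpha(w_f+\delta)$. In the regime $\delta\leq w_f$ this simplifies to $\widehat{w}'_f\leq 2\alpha w_f$, so combining this pointwise bound with the preceding probability lemma yields
\[
    \int_{B}\|z(b)-z'(b)\|_1\,\mathrm{d}b
    \leq 2(2\alpha w_f)\cdot\frac{\delta}{w_f\ln\alpha}
    =\frac{4\alpha}{\ln\alpha}\,\delta.
\]
In the regime $w_f=0$, the bad set is all of $[0,1]$ but $\widehat{w}'_f\leq\alpha\delta$, so the integral is directly bounded by $2\alpha\delta$, which is also at most $\frac{4\alpha}{\ln\alpha}\delta$ since $\ln\alpha\leq 2$ on $(1,2]$.

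The last step, and where the main (modest) obstacle lies, is to verify $\frac{4\alpha}{\ln\alpha}\leq\frac{12}{\alpha-1}$ on $(1,2]$, i.e.\ $f(\alpha):=12\ln\alpha-4\alpha(\alpha-1)\geq 0$. A short calculus check gives $f(1)=0$, $f'(\alpha)=12/\alpha-8\alpha+4$ is positive at $\alpha=1$, has its unique zero in $(1,2]$ at $\alpha=3/2$ (a local maximum of $f$), and $f(2)=12\ln 2-8>0$; hence $f\geq 0$ on the entire interval. The inequality is nearly tight at $\alpha=2$, which is precisely why the constant $12$ appears in the statement and why the slightly wasteful $\widehat{w}'_f\leq 2\alpha w_f$ estimate is still acceptable.
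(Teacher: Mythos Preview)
Your proof is correct and follows essentially the same route as the paper: the identical case split into $w_f=0$ versus $0<\delta\leq w_f$, the same combination of Lemma~\ref{lem:match_diff} with the probability bound to reach $\frac{4\alpha}{\ln\alpha}\delta$, and the same final numerical inequality $\frac{\alpha}{\ln\alpha}\leq\frac{3}{\alpha-1}$ on $(1,2]$. The only difference is that the paper simply asserts the last inequality, whereas you supply the calculus verification; and in the $w_f=0$ case the paper goes directly to $2\alpha\delta\leq\frac{12}{\alpha-1}\delta$ via $\alpha\leq 2$ rather than passing through $\frac{4\alpha}{\ln\alpha}\delta$ as you do.
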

\begin{proof}
If $w_f=0$, we have
\begin{align*}
    &\int_{0}^{1}\left\|\Call{MatchingGame}{G,w,b,\alpha} - \Call{MatchingGame}{G,w+\delta \bm{1}_f,b,\alpha}\right\|_1 \mathrm{d}b\\
    &\leq 2\widehat{w}'_f \leq 2\alpha (w_f+\delta) = 2\alpha \delta\leq \frac{12}{\alpha-1}\delta,
\end{align*}
where the last inequality is from $\alpha\leq 2$.
Otherwise, we have
\begin{align*}
    &\int_{0}^{1}\left\|\Call{MatchingGame}{G,w,b,\alpha} - \Call{MatchingGame}{G,w+\delta \bm{1}_f,b,\alpha}\right\|_1 \mathrm{d}b\\
    &\leq \frac{\delta}{w_f\ln \alpha}\cdot 2\widehat{w}'_f
    \leq \frac{\delta}{w_f\ln \alpha}\cdot 2\alpha(w_f+\delta) 
    \leq \frac{\delta}{w_f\ln \alpha}\cdot 4\alpha w_f = \frac{4\alpha\delta}{\ln \alpha}\leq \frac{12}{\alpha-1}\delta,    
\end{align*}
where the third inequality is from $\delta \leq w_f$ and the last inequality is from the fact that $\frac{\alpha}{\ln \alpha} \leq \frac{3}{\alpha-1}$ holds for $\alpha\in (1,2]$.
\end{proof}

\subsection{Proof of Theorem~\ref{thm:matching}}\label{sec:match_combine}
Combining Lemmas~\ref{lem:match_core1},~\ref{lem:match_core2}, and~\ref{lem:match_lip} and applying Lemma~\ref{lem:benri_welfare} yields the following:
\begin{lemma}\label{lem:match_final}
Let $\epsilon \in \left(0,\frac{1}{2}\right]$. For the matching game, an algorithm that returns $\left(\frac{1}{2}-\epsilon\right)$-approximate core allocation with Lipschitz constant $O(\epsilon^{-1})$ exists. 
\end{lemma}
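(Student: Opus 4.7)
The plan is to instantiate Lemma~\ref{lem:benri_welfare} on the algorithm $\mathcal{A}$ defined by the integral in~\eqref{eq:match_out}, with a suitable choice of $\alpha \in (1,2]$ depending on $\epsilon$. Before invoking the lemma I would note that the matching game is a reasonable welfare-allocation game, since increasing or decreasing one edge weight by $\delta$ changes the maximum-matching value by at most $\delta$.

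Next, I would verify the three numerical hypotheses of Lemma~\ref{lem:benri_welfare} by recycling what is already proved for the non-integrated algorithm. Lemma~\ref{lem:match_core1} gives $\|\mathcal{A}(w)\|_1 \leq 2\alpha\,\OPT(V,w)$ after integrating, so we may set $D = 2\alpha$. Lemma~\ref{lem:match_core2} gives the partial-coalition lower bound pointwise in $b$, and integration preserves it. For the Lipschitz bound I would pull the $\ell_1$-norm inside the integral by the triangle inequality, apply Lemma~\ref{lem:match_lip} to bound the pointwise integrand by $\tfrac{12}{\alpha-1}\delta$ whenever $w_f = 0$ or $\delta \leq w_f$, and then conclude via Lemma~\ref{lem:seeoneelement} that $\mathcal{A}$ is $L$-Lipschitz with $L = \tfrac{12}{\alpha-1}$.

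With these ingredients in hand, the remaining step is tuning $\alpha$. Taking $\alpha = 1 + 2\epsilon$ keeps $\alpha \in (1,2]$ for $\epsilon \in (0,1/2]$, yields $L = 6/\epsilon$, and makes $\tfrac{1}{D} = \tfrac{1}{2+4\epsilon} \geq \tfrac{1}{2}-\epsilon$ by an elementary cross-multiplication (it reduces to $0 \leq 8\epsilon^2$). Lemma~\ref{lem:benri_welfare} then hands us an algorithm returning a $(\tfrac{1}{2}-\epsilon)$-approximate core allocation with Lipschitz constant $2L+1 = O(\epsilon^{-1})$, as claimed.

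The step I anticipate requiring the most care is the polynomial-time evaluation of the integral in~\eqref{eq:match_out}; the rest is mechanical. The key observation is that the greedy loop in Algorithm~\ref{alg:matching} orders edges by the original weights $w_e$ and their indices, both of which are independent of $b$, so the matching $M$ is the same for every $b \in [0,1]$. Hence the coordinate $\mathcal{A}(w)_v$ is simply the integral over $b \in [0,1]$ of $\widehat{w}_{e(v)}(b)$, where $e(v)$ is the edge of $M$ covering $v$ (and $0$ if $v$ is unmatched). The exponent $i_e(b)$ changes at most once on $[0,1]$, at the fractional part of $\log_\alpha w_e$, so $\widehat{w}_e(b)$ is a constant multiple of $\alpha^b$ on each of the at most two subintervals and its definite integral is available in closed form. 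This yields a deterministic polynomial-time evaluation and completes the proof.
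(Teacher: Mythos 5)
Your proof is correct and follows the same route as the paper: fix $\alpha=1+2\epsilon$, feed Lemmas~\ref{lem:match_core1}, \ref{lem:match_core2}, and \ref{lem:match_lip} into Lemma~\ref{lem:benri_welfare} with $D=2\alpha$ and $L=\frac{12}{\alpha-1}=\frac{6}{\epsilon}$, and check that $\frac{1}{2+4\epsilon}\geq\frac12-\epsilon$ and $2L+1=O(\epsilon^{-1})$. The arithmetic checks out, and this is exactly the paper's argument.

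One caution about the final paragraph, which is beyond the scope of Lemma~\ref{lem:match_final} (polynomial-time evaluation is handled separately in the proof of Theorem~\ref{thm:matching}). You assert that the greedy loop orders by the \emph{original} weights $w_e$, so the matching $M$ is independent of $b$. That reading matches the pseudocode's Line~\ref{line:match_loop} literally, but it is inconsistent with the surrounding text, which says edges are sorted by $\widehat{w}_e$, and with the Lipschitz analysis: the argument that $\widehat{w}_f=\widehat{w}'_f$ implies identical output, and the definition of $\prec_{\widehat{w}}$ in terms of $\widehat{w}$, both require the loop to order by $\widehat{w}_e$. (If the order were genuinely by $w_e$, increasing $w_f$ could move $f$ past another edge $g$ with $\widehat{w}_g=\widehat{w}_f$ without changing $\widehat{w}_f$, producing a different matching and a nonzero $\|z-z'\|_1$, contradicting that claim.) Under the $\widehat{w}_e$-ordering, $M$ does depend on $b$; it is only piecewise constant, and the correct polynomial-time evaluation partitions $[0,1]$ at the breakpoints $b_e=\log_\alpha w_e-\floor{\log_\alpha w_e}$ and integrates the closed form $c\,\alpha^b$ on each subinterval, which is what the paper's Theorem~\ref{thm:matching} proof does. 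None of this affects the validity of your proof of Lemma~\ref{lem:match_final} itself.
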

\begin{proof}
Let $\alpha=1+2\epsilon$. 
By combining Lemmas~\ref{lem:match_core1},~\ref{lem:match_core2},~\ref{lem:match_lip}, and~\ref{lem:benri_welfare}, we obtain an algorithm that returns $\frac{1}{2\alpha}$-approximate core allocation for the matching game with Lipschitz constant $\frac{24}{\alpha-1}+1$.
As $\frac{1}{2(1+2\epsilon)}\geq \frac{1}{2}-\epsilon$ and $\frac{24}{2\epsilon}+1\leq O(\epsilon^{-1})$, this algorithm satisfies the claims of the lemma.
\end{proof}

\begin{proof}[Proof of Theorem~\ref{thm:matching}]
It is sufficient to prove that the allocation defined by Lemma~\ref{lem:match_final} can be computed in polynomial time.
For each edge $e \in E$, let $b_e = \log_\alpha w_e - \floor{\log_\alpha w_e}$, and sort the $b_e$ values in ascending order to obtain a sequence $t_1, \dots, t_{|E|}$. For convenience, we set $t_0 = 0$ and $t_{|E|+1} = 1$.
For each $i = 0, \dots, |E|$, the behavior of Algorithm~\ref{alg:matching} for any $b \in [t_i, t_{i+1})$ is identical, except for the constant multiplier on $\widehat{w}$. Therefore, by running Algorithm~\ref{alg:matching} for $b = t_i$ and appropriately scaling the result, and thereafter summing these results for each $i$, we can compute the integral in Equation~\eqref{eq:match_out} in polynomial time.
\end{proof}

\section{Minimum Spanning Tree Game}\label{sec:mst}
In this section, we prove Theorem~\ref{thm:mst} by giving a Lipschitz continuous algorithm that returns an approximate core allocation for the minimum spanning tree game.
The proof is obtained by constructing an algorithm that satisfies the assumption of Lemma~\ref{lem:benri_cost}. Specifically, we prove that an algorithm that returns a vector represented by
\begin{align*}
    \int_{0}^{1}\Call{MSTGame}{G,w,b}\mathrm{d}b
\end{align*}
satisfies the assumptions of Lemma~\ref{lem:benri_cost}, where the procedure \Call{MSTGame}{} is provided in Algorithm~\ref{alg:mst}. We give a deterministic algorithm to compute this integral in Section~\ref{sec:mst_combine}.

To derive our allocation, we use an \emph{auxiliary tree} that simulates Kruskal's algorithm, constructed as in Algorithm~\ref{alg:auxiliary}. The auxiliary tree is a rooted tree such that each leaf corresponds to a vertex in $V\cup \{r\}$.
We provide an overview of Algorithm~\ref{alg:auxiliary}. Initially, for each vertex $v \in V \cup \{r\}$, the algorithm prepares a vertex $u_{\{v\}}$ and sets its \emph{height} $h_{u_{\{v\}}}$ to $0$. The auxiliary tree is constructed by adding the edges of $E$ in ascending order of weight to graph $(V \cup \{r\},\emptyset)$. Edges of the same weight are added simultaneously. When adding the edges of a certain weight results in merging multiple connected components $C_1, \dots, C_k$ into a single connected component $C$, the algorithm creates a vertex $u_C$ corresponding to $C$ in the auxiliary tree. The height of $u_C$ is set as the weight of the edges at that time, and the edges are added to the auxiliary tree from $u_C$ to $u_{C_1}, \dots, u_{C_k}$.

\begin{algorithm}[t!]
\caption{Construction of the auxiliary tree}\label{alg:auxiliary}
\Procedure{\emph{\Call{AuxiliaryTree}{$G,\widehat{w}$}}}{
    \KwIn{A graph $G=(V\cup \{r\},E)$ and a weight vector $\widehat{w} \in \mathbb{R}_{\geq 0}^E$.}
    $U\leftarrow \{u_{\{v\}}\mid v\in V\cup \{r\}\}$, $F\leftarrow \emptyset$\;
    $h_u\leftarrow 0$ for each $u\in U$\;
    \For{$x\in \mathbb{R}_{\geq 0}$ such that $\mathcal{C}_{\widehat{w},<x}\neq \mathcal{C}_{\widehat{w},\leq x}$, in ascending order}{
        \For{$C\in \mathcal{C}_{\widehat{w},\leq x}\setminus \mathcal{C}_{\widehat{w},<x}$}{
            Add a new vertex $u_{C}$ to $U$\;
            $h_{u_{C}}\leftarrow x$\;
            \For{$C'\in \mathcal{C}_{\widehat{w},<x}$ such that $C'\subseteq C$}{
                Add a new edge $(u_{C},u_{C'})$ to $F$\;
            }
        }
    }
    \Return $(U,F,h)$\;
}
\end{algorithm}

For $x\in \mathbb{R}_{\geq 0}$ and a weight vector $\widehat{w}$, let $\mathcal{C}_{\widehat{w},<x}$ and $\mathcal{C}_{\widehat{w},\leq x}$ be the families of connected components of graphs whose vertex sets are $V\cup \{r\}$ and edge sets consist of edges $e\in E$ with $\widehat{w}_e<x$ and $\widehat{w}_e\leq x$, respectively. 
For the auxiliary tree $T$, we denote the subtree rooted at vertex $u$ by $T_u$. 
When the edges of an auxiliary tree are referred to as $(u, u')$, $u$ is the parent of $u'$.
For an edge $e=(u,u')$, we denote $T_e=T_{u'}$.
For simplicity, for $e=(u,u')\in E(T)$, we define $h_e := h_u$.

\Call{MSTGame}{$G,w,b$} first rounds each edge weight $w_e$ to a value $\widehat{w}_e$ that is proportional to a power of $2$, where the proportionality constant is determined by $b$. Let $T$ be the auxiliary tree derived from $(G,\widehat{w})$. Then, for each edge $e$ in $T$ such that $T_e$ does not have $r$ as a leaf, the value $h_e$ is evenly distributed among the agents corresponding to the leaves of $T_e$. 
At first glance, the total value distributed may seem unrelated to the value of the grand coalition.
However, it can be proved in Lemmas~\ref{lem:mst_core1} and~\ref{lem:mst_core2} that the total value distributed by this method is at least $\OPT(G,\widehat{w})$ and at most $2\OPT(G,\widehat{w})$.

\begin{algorithm}[t!]
\caption{Lipschitz continuous allocation for the minimum spanning tree game}\label{alg:mst}
\Procedure{\emph{\Call{MSTGame}{$G,w,b$}}}{
    \KwIn{A graph $G=(V\cup \{r\},E)$, a weight vector $w \in \mathbb{R}_{\geq 0}^E$, and $b\in [0,1]$.}
    For each $e\in E$ with $w_e>0$, let $\widehat{w}_e\leftarrow 2^{i_e+1+b}$, where $i_e$ be the unique integer such that $2^{i_e+b}\leq w_e < 2^{i_e+1+b}$\;
    $T\leftarrow \Call{AuxiliaryTree}{G,\widehat{w}}$ and identify the vertices of $G$ with the corresponding leaves of $T$\;
    \For{$e\in E(T)$ such that $r\not \in T_e$}{
        Let $X_{e}$ be the set of leaves in $T_{e}$\;
        Let $z_{e}\leftarrow \frac{h_e}{|X_{e}|}\bm{1}_{X_{e}}$\;
    }
    \Return $\sum_{e\in E}z_{e}$\;
}
\end{algorithm}

\subsection{Core Approximability}

We begin by analyzing the core approximability.
Let $T$ be the auxiliary tree for $(G,\widehat{w})$. 
For $X\subseteq V\cup \{r\}$, the \emph{connector} $\CONN(X)$ of $X$ is the minimal connected subgraph of $T$ that contains all leaves of $T$ corresponding to $X$.
The following lemma bounds the value of the characteristic function for a subset $S$ of $V$ using values that can be computed from connector $\CONN(S\cup \{r\})$. 
\begin{lemma}\label{lem:mst_aux}
Let $S\subseteq V$. Then, we have
\begin{align*}
    \sum_{\substack{(u,u')\in E(\CONN(S\cup \{r\}))\\ r\not \in T_{u'}}}(h_u-h_{u'})\leq \OPT(G[S\cup \{r\}], \widehat{w}).
\end{align*}
When $V=S$, the equality holds.
\end{lemma}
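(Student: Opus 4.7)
The plan is to reformulate the LHS as a cleaner combinatorial sum and then bound it by the weight of any spanning tree of $G[S \cup \{r\}]$ via a layer-cake comparison.

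First I would rewrite LHS. Let $T^S := \CONN(S \cup \{r\})$, rooted at $\rho_S := \LCA(S \cup \{r\})$ in $T$, with $r$-path $\rho_S = r_0, r_1, \ldots, r_m = r$. The edges of $T^S$ satisfying $r \in T_{u'}$ are exactly the $r$-path edges, whose contributions telescope to $\sum_i (h_{r_i} - h_{r_{i+1}}) = h_{\rho_S}$. Combined with the identity
\begin{equation*}
\sum_{(u,u') \in E(T^S)} (h_u - h_{u'}) = \sum_{u \text{ internal of } T^S} (k_u^{\CONN} - 1)\, h_u + h_{\rho_S},
\end{equation*}
valid for any rooted tree with leaves at height $0$ (and where $k_u^{\CONN}$ denotes the number of children of $u$ in $T^S$), subtraction gives $\mathrm{LHS} = \sum_{u \text{ internal of } T^S} (k_u^{\CONN} - 1)\, h_u$.

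Next I would write this as a layer-cake integral $\mathrm{LHS} = \int_0^{\infty} N(y) \, dy$ with $N(y) := \sum_{u \text{ internal},\, h_u \geq y} (k_u^{\CONN} - 1)$, and identify $N(y) = |\mathcal{P}_y| - 1$ for $y > 0$, where $\mathcal{P}_y$ is the partition of $S \cup \{r\}$ induced by the components in $\mathcal{C}_{\widehat{w}, < y}$. This identity holds because cutting $T^S$ at height $y$ yields a single top subtree rooted at $\rho_S$ (heights strictly increase toward the root, so the top part is closed under the parent operation), whose bottom pieces are in bijection with the classes of $\mathcal{P}_y$; applying the formula (number of internal vertices' merges) $=$ (number of leaves) $- 1$ to the top subtree gives the count.

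For any spanning tree $\tau$ of $G[S \cup \{r\}]$ and every $y > 0$, the standard fact that a spanning tree must contain at least $|\mathcal{P}| - 1$ edges crossing the classes of any partition $\mathcal{P}$ of its vertex set (contract each class and observe the contracted graph is still connected) gives at least $|\mathcal{P}_y| - 1$ edges of $\tau$ with endpoints in distinct $\mathcal{P}_y$-classes. Each such edge $e$ has $\widehat{w}_e \geq y$, since its endpoints are separated in $G$ at every threshold $< y$. Therefore
\begin{equation*}
\OPT(G[S \cup \{r\}], \widehat{w}) \leq \sum_{e \in \tau} \widehat{w}_e = \int_0^{\infty} |\{e \in \tau : \widehat{w}_e \geq y\}| \, dy \geq \int_0^{\infty} (|\mathcal{P}_y| - 1) \, dy = \mathrm{LHS},
\end{equation*}
and minimizing over $\tau$ proves the inequality. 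For equality when $S = V$, Kruskal's MST on $G$ uses exactly $k_u - 1$ edges of weight $h_u$ at each internal vertex $u$ of $T = T^V$, so its total weight matches $\mathrm{LHS}$.

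The main obstacle I anticipate is carefully justifying $N(y) = |\mathcal{P}_y| - 1$: this requires relating the bottom subtrees of $T^S$ cut at height $y$ to the connectivity structure of $G$ at threshold $< y$ restricted to $S \cup \{r\}$, which in turn hinges on the facts that $T^S$ is closed under taking LCAs of leaves in $S \cup \{r\}$ and that $\rho_S$ lies above every element of $S \cup \{r\}$.
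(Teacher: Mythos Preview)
Your proposal is correct and follows essentially the same route as the paper: both rewrite the left-hand side as $\sum_{u}(k_u-1)h_u$ over internal vertices of the connector (the paper via an induction on $|C|$, you via telescoping off the $r$-path) and then bound this by any spanning-tree weight through a threshold-partition comparison (the paper discretely via $N_{<x},N_{\leq x}$, you via the layer-cake integral), with the equality case handled by Kruskal. The only cosmetic blemish is the mixed-direction chain $\OPT \leq \cdots \geq \mathrm{LHS}$ in your display, which your closing sentence already disambiguates.
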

\begin{proof}
Let $R$ be the minimum spanning tree of $G[S\cup \{r\}]$ with respect to weight $\widehat{w}$. 
For $x\in \mathbb{R}_{\geq 0}$, let $N_{<x}=\left|\{C\in \mathcal{C}_{\widehat{w},<x}\colon C\cap (S\cup \{r\})\neq \emptyset\}\right|$ and $N_{\leq x}=\left|\{C\in \mathcal{C}_{\widehat{w},\leq x}\colon C\cap (S\cup \{r\})\neq \emptyset\}\right|$. Then, $T$ contains at least $N_{<x}-1$ edges $e$ with $\widehat{w}_e\geq x$. 
Specifically, we have
\begin{align*}
    \sum_{e\in E(R)}\widehat{w}_e\geq \sum_{\substack{x\in \mathbb{R}_{\geq 0}\\ \mathcal{C}_{\widehat{w},\leq x}\neq \mathcal{C}_{\widehat{w},<x}}}(N_{\leq x}-N_{<x})x
    = \sum_{u\in V(\CONN(S\cup \{r\}))}(d_{\CONN(S\cup \{r\})}(u)-1)h_u,
\end{align*}
where $d_{\CONN(S\cup \{r\})}(u)$ denotes the number of children of $u$ in $\CONN(S\cup \{r\})$. We observe that the equality holds for the first inequality for $V=S$ by recalling Kruskal's algorithm.
Let us prove
\begin{align*}
    \sum_{u\in V(\CONN(S\cup \{r\}))}(d_{\CONN(S\cup \{r\})}(u)-1)h_u = \sum_{\substack{(u,u')\in E(\CONN(S\cup \{r\}))\\ r\not \in T_{u'}}}(h_u-h_{u'}).
\end{align*}
to complete the proof. Because we have
\begin{align*}
       \sum_{\substack{(u,u')\in E(\CONN(S\cup \{r\}))\\ r\in T_{u'}}}(h_u-h_{u'}) = h_{\LCA(S\cup \{r\})},
\end{align*}
where $\LCA(C)$ denotes the lowest common ancestor of $C$ in the auxiliary tree, it is suffice to show
\begin{align*}
    \sum_{u\in V(\CONN(C))}(d_{\CONN(C)}(u)-1)h_u = \sum_{(u,u')\in E(\CONN(C))}(h_u-h_{u'}) - h_{\LCA(C)}
\end{align*}
holds for all $C\subseteq V\cup \{r\}$.
We prove this by induction.
For $|C|=1$, both the left- and the right-hand sides are equal to zero. For $|C|>1$, let $u_{C_1},\dots, u_{C_{d_{\CONN(C)}}}$ be children of $\LCA(C)$ in $\CONN(C)$. Then, we have
\begin{align*}
    &\sum_{u\in V(\CONN(C))}(d_{\CONN(C)}(u)-1)h_u\\
    &= \sum_{i=1}^{d_{\CONN(C)}}\sum_{u\in C_i}(d_{\CONN(C_i)}(u)-1)h_u + (d_{\CONN(C)}-1)h_{\LCA(C)}\\
    &= \sum_{i=1}^{d_{\CONN(C)}}\left(\sum_{(u,u')\in E(\CONN(C_i))}(h_u-h_{u'}) - h_{u_{C_i}}\right) + (d_{\CONN(C)}-1)h_{\LCA(C)}\\
    &=\sum_{i=1}^{d_{\CONN(C)}}\left(\sum_{(u,u')\in E(\CONN(C_i))}(h_u-h_{u'}) + (h_{\LCA(C)}-h_{u_{C_i}})\right) - h_{\LCA(C)}\\
    &= \sum_{(u,u')\in E(\CONN(C))}(h_u-h_{u'}) - h_{\LCA(C)},
\end{align*}
and the induction hypothesis is proved. Therefore, the lemma holds.
\end{proof}

Now, we have the following.
\begin{lemma}\label{lem:mst_core1}
Let $S\subseteq V$. Then, we have 
\begin{align*}
    \sum_{v\in S}\sum_{e\in E(T)}z_{e,v}\leq 4\OPT(G[S\cup \{r\}],w).
\end{align*}
\end{lemma}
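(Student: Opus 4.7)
The plan is to swap the order of summation and then reduce to Lemma~\ref{lem:mst_aux} via two factor-of-$2$ losses, giving the claimed factor of $4$.

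First, I would rewrite the left-hand side as a sum over edges of $T$. Since $z_{e,v} = h_e/|X_e|$ when $v \in X_e$ (and $r \notin T_e$) and is zero otherwise, I get
\begin{align*}
    \sum_{v\in S}\sum_{e\in E(T)}z_{e,v} = \sum_{\substack{e=(u,u')\in E(T)\\ r\not\in T_{u'}}}\frac{h_u\,|X_{u'}\cap S|}{|X_{u'}|}.
\end{align*}
Any term with $X_{u'}\cap S=\emptyset$ vanishes, and whenever $X_{u'}\cap S\neq \emptyset$ and $r\notin T_{u'}$, the edge $(u,u')$ must lie in $\CONN(S\cup\{r\})$: otherwise the leaves of $S$ sitting inside $T_{u'}$ could not be joined to $r$. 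Hence the sum is actually taken over edges of the connector with $r\not\in T_{u'}$.

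Next I would apply two simple inequalities term by term. Trivially $|X_{u'}\cap S|\le |X_{u'}|$, so each summand is at most $h_u$. The key observation is that after the power-of-two rounding, every nonzero height equals $2^{i+b}$ for some integer $i$, while leaves have height $0$. So for any edge $(u,u')\in E(T)$: if $u'$ is a leaf, $h_{u'}=0$ and $h_u - h_{u'} = h_u$; otherwise $h_u \ge 2h_{u'}$, giving $h_u - h_{u'} \ge h_u/2$. In both cases $h_u \le 2(h_u - h_{u'})$, so
\begin{align*}
    \sum_{v\in S}\sum_{e\in E(T)}z_{e,v} \;\le\; \sum_{\substack{(u,u')\in E(\CONN(S\cup\{r\}))\\ r\not\in T_{u'}}} h_u \;\le\; 2\!\!\sum_{\substack{(u,u')\in E(\CONN(S\cup\{r\}))\\ r\not\in T_{u'}}}(h_u-h_{u'}).
\end{align*}

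Finally I apply Lemma~\ref{lem:mst_aux} to bound this by $2\,\OPT(G[S\cup\{r\}],\widehat{w})$, and then use $\widehat{w}_e = 2^{i_e+1+b} \le 2w_e$ (from the definition of $\widehat{w}$) to conclude $\OPT(G[S\cup\{r\}],\widehat{w}) \le 2\,\OPT(G[S\cup\{r\}],w)$, giving the claimed $4\,\OPT(G[S\cup\{r\}],w)$ bound.

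The only non-routine ingredient is the inequality $h_u \le 2(h_u - h_{u'})$, which is precisely the reason the rounding in Algorithm~\ref{alg:mst} uses powers of $2$ (and not a tunable base $\alpha$ as in the matching algorithm): a smaller base would let consecutive heights be closer and blow up this constant. Combined with the factor $2$ from $\widehat{w}\le 2w$, this forces the approximation ratio to be $4$. Everything else is bookkeeping about the auxiliary tree already set up before the lemma.
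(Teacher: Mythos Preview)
Your proof is correct and follows essentially the same line as the paper's own argument: restrict the sum to edges of $\CONN(S\cup\{r\})$ with $r\notin T_{u'}$, bound each term by $h_u$, use the power-of-$2$ rounding to get $h_u\le 2(h_u-h_{u'})$, apply Lemma~\ref{lem:mst_aux}, and finish with $\widehat w_e\le 2w_e$. The only cosmetic difference is that the paper bounds $\sum_{v\in S}z_{e,v}\le\sum_{v\in V}z_{e,v}=h_e$ instead of writing the fraction $|X_{u'}\cap S|/|X_{u'}|\le 1$, which is the same step.
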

\begin{proof}
For $v\in S$ and $e\in E(T)$ such that $r\not \in T_e$, $z_{e,v}>0$ happens only when $e\subseteq \CONN(S\cup \{r\})$. Therefore, we have
\begin{align*}
    \sum_{v\in S}\sum_{e\in E(T)}z_{e,v}
    &=\sum_{\substack{e\in E(\CONN(S\cup \{r\}))\\ r\not \in T_e}}\sum_{v\in S}z_{e,v}
    \leq \sum_{\substack{e\in E(\CONN(S\cup \{r\}))\\ r\not \in T_e}}\sum_{v\in V}z_{e,v}
    = \sum_{\substack{e\in E(\CONN(S\cup \{r\}))\\ r\not \in T_e}}h_e\\
    &\leq \sum_{\substack{(u,u')\in E(\CONN(S\cup \{r\}))\\ r\not \in T_e}}2(h_u-h_{u'})
    \leq 2\OPT(G[S\cup \{r\}],\widehat{w})\\
    &\leq 4\OPT(G[S\cup \{r\}],w),
\end{align*}
where the second inequality is from $h_{u}\geq 2h_{u'}$ that follows from the fact that each $\widehat{w}_e$ is rounded to a value proportional to the power of $2$ and the third inequality is from Lemma~\ref{lem:mst_aux}.
\end{proof}

We have the following bound for the grand coalition.
\begin{lemma}\label{lem:mst_core2}
We have 
\begin{align*}
\sum_{v\in V}\sum_{e\in E(T)}z_{e,v}\geq \OPT(G,w).
\end{align*}
\end{lemma}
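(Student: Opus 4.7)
The plan is to reduce the claim to the equality case of Lemma~\ref{lem:mst_aux} applied with $S=V$, losing only the non-negative term $h_{u'}$ on each tree edge and the rounding slack $\widehat{w}_e \ge w_e$.

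First I would compute the left-hand side exactly. By construction, for every edge $e \in E(T)$ with $r \notin T_e$ the vector $z_e$ is supported on the leaves $X_e$ of $T_e$, and each coordinate equals $h_e/|X_e|$. Since $r \notin T_e$, all leaves of $T_e$ lie in $V$, so $\sum_{v \in V} z_{e,v} = h_e$. Summing over edges,
\begin{align*}
\sum_{v\in V}\sum_{e\in E(T)}z_{e,v}
\;=\;\sum_{\substack{e=(u,u')\in E(T)\\ r\notin T_{u'}}} h_u.
\end{align*}

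Next I would invoke Lemma~\ref{lem:mst_aux} with $S=V$, noting that $\CONN(V\cup\{r\})$ is the whole auxiliary tree $T$. The equality case of that lemma gives
\begin{align*}
\sum_{\substack{(u,u')\in E(T)\\ r\notin T_{u'}}}(h_u-h_{u'}) \;=\; \OPT(G,\widehat{w}).
\end{align*}
Since $h_{u'}\ge 0$ for every internal vertex, each summand on the left above is at most $h_u$, so
\begin{align*}
\sum_{\substack{(u,u')\in E(T)\\ r\notin T_{u'}}} h_u \;\ge\; \OPT(G,\widehat{w}).
\end{align*}

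Finally, the rounding step sets $\widehat{w}_e \ge w_e$ for every edge (trivially if $w_e=0$, and by the defining inequality $w_e < 2^{i_e+1+b}=\widehat{w}_e$ otherwise), so any spanning tree of $G$ has non-decreasing cost under $\widehat{w}$, giving $\OPT(G,\widehat{w})\ge \OPT(G,w)$. Chaining the three inequalities yields the desired bound. There is no real obstacle here: the only thing to verify carefully is that the equality case of Lemma~\ref{lem:mst_aux} applies cleanly because $\CONN(V\cup\{r\})=T$, and that the leaves of $T_e$ with $r\notin T_e$ are all contained in $V$, so the sum $\sum_{v\in V}z_{e,v}$ recovers the full height $h_e$ rather than a strict fraction of it.
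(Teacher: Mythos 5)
Your proof is correct and follows essentially the same argument as the paper: compute $\sum_{v\in V}\sum_e z_{e,v}$ as a sum of heights over tree edges not containing $r$, lower-bound it by $\sum(h_u - h_{u'})$ using non-negativity of $h$, invoke the equality case of Lemma~\ref{lem:mst_aux} with $S=V$, and finish with $\widehat{w}\ge w$. You are in fact a bit more careful than the paper's terse display, which omits the restriction $r\notin T_{u'}$ in the intermediate sums even though it is needed to match the statement of Lemma~\ref{lem:mst_aux}.
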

\begin{proof}
We have
\begin{align*}
    \sum_{v\in V}\sum_{e\in E(T)}z_{e,v} = \sum_{e\in E(T)}h_e \geq \sum_{(u,u')\in E(T)}(h_u-h_{u'}) = \OPT(G,\widehat{w})\geq \OPT(G,w),
\end{align*}
where the second equality is obtained from Lemma~\ref{lem:mst_aux}.
\end{proof}

\subsection{Lipschitz Continuity}

Let $f\in E(G)$. We bound
\begin{align*}
    \frac{1}{\delta}\int_{0}^{1}\left\|\Call{MSTGame}{G,w,b} - \Call{MSTGame}{G,w+\delta \bm{1}_f,b}\right\|_1 \mathrm{d}b,
\end{align*}
which is an upper bound on
\begin{align*}
    \frac{1}{\delta}\left\|\int_{0}^{1}\Call{MSTGame}{G,w,b}\mathrm{d}b - \int_{0}^{1}\Call{MSTGame}{G,w+\delta \bm{1}_f,b}\mathrm{d}b\right\|_1.
\end{align*}
Without loss of generality, we can assume $\delta \leq w_f$ or $w_f=0$.
Now we fix $b\in [0,1]$. We denote the value of $T$, $\widehat{w}$, $X$, and $z$ in \Call{MSTGame}{$G,w,b$} (resp., \Call{MSTGame}{$G,w+\delta \bm{1}_f, b$}) as $T$, $\widehat{w}$, $X$, and $z$ (resp. $T'$, $\widehat{w}'$, $X'$, and $z'$).

When $\widehat{w}_f=\widehat{w}'_f$, $T$ and $T'$ are the same and \Call{MSTGame}{$G,w,b$} and \Call{MSTGame}{$G,w+\delta \bm{1}_f, b$} output the same vector. Otherwise, let $C$ be the connected component of $\mathcal{C}_{\widehat{w},\leq \widehat{w}_f}$ that contains both the endpoints of $f$. If $C$ is still connected even after the removal of $f$, then we have $\mathcal{C}_{\widehat{w},\leq x}=\mathcal{C}_{\widehat{w}',\leq x}$ for all $x\geq 0$ and thus $T=T'$.
Otherwise, let $C_1$ and $C_2$ be two connected components of $C$ after the removal of $f$. Subsequently, $T'$ is obtained from $T$ by the following operation:
\begin{itemize}
    \item[(1.1)] If no vertex $u_{C_1}$ exists in $T$, create a vertex $u_{C_1}$, set $h_{u_{C_1}}=\widehat{w}_f$, and for each child $u_X$ of $u_C$ with $X\subseteq C_1$, replace the edge $(u_C,u_X)$ with $(u_{C_1},u_X)$. Otherwise, delete the edge $(u_C,u_{C_1})$.
    \item[(1.2)] Do exactly the same for $C_2$. 
    \item[(2)] If $u_C$ has a parent $u_Y$ with $h_{u_Y}=\widehat{w}'_f$, delete the vertex $u_{C}$ and the edge $(u_Y,u_C)$, and add two new edges $(u_Y,u_{C_1})$ and $(u_Y,u_{C_2})$. Otherwise, add two new edges $(u_{C},u_{C_1})$ and $(u_{C},u_{C_2})$ and then change the value of $h_{u_C}$ from $\widehat{w}_f$ to $\widehat{w}'_f$.
\end{itemize}

We can observe that all edges $e$ of $T$ except for the edges $(u_C,u_{C_1})$ deleted in (1.1), $(u_C,u_{C_2})$ deleted in (1.2), and $(u_Y,u_C)$ deleted in (2) naturally correspond to edges in $T'$ that are not added in (2), and if we identify the edges of $T$ with those of $T'$ using that correspondence, it holds that $h_e = h'_e$ and $X_e = X'_e$, which implies $z_e=z'_e$. Therefore, we have the following.

\begin{lemma}
Assume $\widehat{w}_f\neq \widehat{w}'_f$. Then, we have
\begin{align*}
    \left\|\Call{MSTGame}{G,w,b} - \Call{MSTGame}{G,w+\delta \bm{1}_f,b}\right\|_1 \leq \widehat{w}_f + 2\widehat{w}'_f.
\end{align*}
\end{lemma}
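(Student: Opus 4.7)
The plan is to piggy-back on the structural description of $T'$ in terms of $T$ given by operations (1.1), (1.2), and (2) right before the lemma statement. Since corresponding edges share $(h_e, X_e)$ and hence $z_e$, they cancel in the difference
\[
y - y' := \Call{MSTGame}{G, w, b} - \Call{MSTGame}{G, w + \delta\bm{1}_f, b},
\]
leaving only contributions from the at most three ``lost'' edges of $T$ (the $z$-vectors of $(u_C, u_{C_1})$, $(u_C, u_{C_2})$, $(u_Y, u_C)$ if they exist) minus the two ``gained'' edges of $T'$ (the ones introduced in step (2)), with any summand discarded whose subtree contains $r$.

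First I would dispose of the easy sub-case where removing $f$ keeps $C$ connected: here $\mathcal{C}_{\widehat{w}, \leq x} = \mathcal{C}_{\widehat{w}', \leq x}$ at every threshold $x$, so $T = T'$ and $y = y'$. Otherwise $C$ splits as $C_1 \sqcup C_2$ and every surviving summand has the form $\frac{h}{|X|}\bm{1}_X$ with $X \in \{C_1, C_2, C\}$ and $h \in \{\widehat{w}_f, \widehat{w}'_f\}$, so $y - y'$ is a short combination of the three indicator vectors $\bm{1}_{C_1}, \bm{1}_{C_2}, \bm{1}_C$. I then case-split on which case of (2) applies and on whether each $u_{C_i}$ existed in $T$ (governing whether the lost edge $(u_C, u_{C_i})$ is present); in each sub-case a direct calculation of $\|\alpha \bm{1}_{C_1} + \beta \bm{1}_{C_2} + \gamma \bm{1}_C\|_1$ yields a bound of at most $\widehat{w}_f + 2\widehat{w}'_f$. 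Two identities drive sharpness: in case 1 of (2), the three height-$\widehat{w}'_f$ edges $(u_Y, u_C), (u_Y, u_{C_1}), (u_Y, u_{C_2})$ jointly satisfy
\[
\left\|\frac{\widehat{w}'_f}{|C|}\bm{1}_C - \frac{\widehat{w}'_f}{|C_1|}\bm{1}_{C_1} - \frac{\widehat{w}'_f}{|C_2|}\bm{1}_{C_2}\right\|_1 = \widehat{w}'_f
\]
rather than $3\widehat{w}'_f$, because $C = C_1 \sqcup C_2$; and, for each $i$ with $u_{C_i}$ already in $T$, the lost edge $(u_C, u_{C_i})$ pairs with its $T'$-counterpart on the identical support $C_i$ to contribute $\widehat{w}'_f - \widehat{w}_f$ rather than $\widehat{w}_f + \widehat{w}'_f$.

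The main obstacle will be the bookkeeping over the sub-cases, further refined by whether $r$ lies in $C_1$, in $C_2$, or outside $C$, which controls which summands survive the $r \notin T_e$ filter. Since the filter only drops summands from an already-sparse combination in $\{\bm{1}_{C_1}, \bm{1}_{C_2}, \bm{1}_C\}$, the bound $\widehat{w}_f + 2\widehat{w}'_f$ survives every branch, and in most branches it is comfortably slack (typical values are $\widehat{w}'_f$, $2\widehat{w}'_f$, or $2\widehat{w}_f + \widehat{w}'_f$, all at most $\widehat{w}_f + 2\widehat{w}'_f$ using $\widehat{w}_f \leq \widehat{w}'_f$).
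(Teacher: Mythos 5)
Your proposal is correct and follows essentially the same route as the paper's proof: both identify exactly which edges of the auxiliary tree change between $T$ and $T'$ under operations (1.1), (1.2), and (2), reduce the difference to a short combination of $\bm{1}_{C_1}$, $\bm{1}_{C_2}$, $\bm{1}_C$ scaled by $\widehat{w}_f$ and $\widehat{w}'_f$, and exploit the disjointness $C = C_1 \sqcup C_2$ plus $\widehat{w}_f \leq \widehat{w}'_f$ to bound the $\ell_1$ norm. The paper packages these vectors as $z_1,\dots,z_5$, splits the norm into its $C_1$- and $C_2$-restrictions, and bounds each by $\max\bigl(\widehat{w}_f + \tfrac{|C_i|}{|C|}\widehat{w}'_f,\ \widehat{w}'_f\bigr)$ in one stroke, which covers all your sub-cases at once; your explicit case analysis over which of $(u_C,u_{C_1})$, $(u_C,u_{C_2})$, $(u_Y,u_C)$ are deleted and where $r$ sits is a more verbose but equivalent way of reaching the same bound.
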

\begin{proof}
Let
\begin{align*}
    z_1&=\begin{cases}
        \frac{1}{|C_1|}\bm{1}_{C_1} & \text{if $(u_C,u_{C_1})$ is deleted in (1.1) and $r\not \in C_1$}\\
        \bm{0} &\text{otherwise}        
    \end{cases},\\
    z_2&=\begin{cases}
        \frac{1}{|C_2|}\bm{1}_{C_2} & \text{if $(u_C,u_{C_2})$ is deleted in (1.2) and $r\not \in C_2$}\\
        \bm{0} &\text{otherwise}        
    \end{cases},\\
    z_3&=\begin{cases}
        \frac{1}{|C|}\bm{1}_{C} & \text{if $(u_Y,u_C)$ is deleted in (1.2) and $r\not \in C$}\\
        \bm{0} &\text{otherwise}        
    \end{cases},\\
    z_4&=\begin{cases}
        \frac{1}{|C_1|}\bm{1}_{C_1} & \text{if $r\not \in C_1$}\\
        \bm{0} &\text{otherwise}        
    \end{cases},\\
    z_5&=\begin{cases}
        \frac{1}{|C_2|}\bm{1}_{C_2} & \text{if $r\not \in C_2$}\\
        \bm{0} &\text{otherwise}        
    \end{cases}.
\end{align*}
Then, we have
\begin{align*}
    &\left\|\Call{MSTGame}{G,w,b} - \Call{MSTGame}{G,w+\delta \bm{1}_f,b}\right\|_1\\
    &= \left\|\left(z_1+z_2\right)\widehat{w}_f+\left(z_3-z_4-z_5\right)\widehat{w}'_f\right\|_1\\
    &= \left\|\left(z_1 \widehat{w}_f+\left(z_3\circ \bm{1}_{C_1}-z_4\right) \widehat{w}'_f\right)\right\|_1+\left\|\left(z_2 \widehat{w}_f+\left(z_3\circ \bm{1}_{C_2}-z_5\right) \widehat{w}'_f\right)\right\|_1\\
    &\leq \max\left(\widehat{w}_f+\frac{|C_1|}{|C|}\widehat{w}'_f, \widehat{w}'_f\right)+\max\left(\widehat{w}_f+\frac{|C_2|}{|C|}\widehat{w}'_f, \widehat{w}'_f\right)
    \leq \widehat{w}_f+2\widehat{w}'_f.\qedhere
\end{align*}
\end{proof}

The following lemma analyzes the probability that $\widehat{w}_f\neq \widehat{w}'_f$ happens.
\begin{lemma}
Assume $w'_f\leq 2w_f$. If $b$ is sampled uniformly from $[0,1]$, $\widehat{w}_f\neq \widehat{w}'_f$ happens with a probability of at most $\frac{\delta}{w_f\log 2}$.
\end{lemma}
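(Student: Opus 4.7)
The plan is to mimic, essentially verbatim, the analogous probability lemma from the matching-game section, specialized to base $\alpha = 2$. Recall from Algorithm~\ref{alg:mst} that $\widehat{w}_f = 2^{i_f + 1 + b}$, where $i_f$ is the unique integer satisfying $2^{i_f + b} \leq w_f < 2^{i_f + 1 + b}$. Hence $\widehat{w}_f \neq \widehat{w}'_f$ occurs precisely when some rounding threshold $2^{i+b}$ (for an integer $i$) lies in the half-open interval $(w_f,\, w_f + \delta]$. Taking $\log_2$, this is equivalent to the existence of an integer $i$ with $\log_2 w_f - b < i \leq \log_2(w_f + \delta) - b$, i.e., to $\floor{\log_2 w_f - b} \neq \floor{\log_2(w_f+\delta) - b}$.

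Under the hypothesis $w'_f = w_f + \delta \leq 2 w_f$, the interval length $\log_2(w_f + \delta) - \log_2 w_f$ is at most $1$. Since $b$ is uniform on $[0,1]$, a standard fractional-part argument shows that the event ``a length-$\ell$ interval shifted by $-b$ contains an integer'' has probability exactly $\ell$ whenever $\ell \leq 1$. Therefore
\[
\Pr\bigl[\widehat{w}_f \neq \widehat{w}'_f\bigr] \;=\; \log_2(w_f + \delta) - \log_2 w_f \;=\; \log_2\!\left(1 + \frac{\delta}{w_f}\right).
\]

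Applying the elementary inequality $\ln(1+x) \leq x$ for $x \geq 0$ then yields
\[
\log_2\!\left(1 + \frac{\delta}{w_f}\right) \;=\; \frac{\ln(1 + \delta/w_f)}{\ln 2} \;\leq\; \frac{1}{\ln 2} \cdot \frac{\delta}{w_f} \;=\; \frac{\delta}{w_f \log 2},
\]
reading $\log$ as the natural logarithm (consistent with the paper's usage in this section). I do not foresee any real obstacle: the proof is a direct specialization of the earlier matching lemma, and the assumption $w'_f \leq 2 w_f$ is precisely what is needed to guarantee that the relevant shifted interval has length at most $1$, so that its probability of containing an integer equals its length rather than being bounded by a union bound over multiple integers.
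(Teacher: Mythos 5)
Your proof is correct and follows essentially the same route as the paper's: identify the bad event with the existence of an integer rounding threshold $2^{i+b}$ in $(w_f, w_f+\delta]$, note the probability equals $\log_2(w_f+\delta)-\log_2 w_f$ because $b$ is uniform and (thanks to $w'_f\le 2w_f$) the shifted interval has length at most $1$, then apply $\ln(1+x)\le x$. Your explicit remark about where the hypothesis $w'_f\le 2w_f$ enters and the reading of $\log$ as $\ln$ are both consistent with the paper's (more terse) argument.
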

\begin{proof}
$\widehat{w}_f\neq \widehat{w}'_f$ happens when there is an integer $i$ with $w_f<2^{i+b}\leq w_f+\delta$, implying that $\floor{\log_2 w_f-b}\neq \floor{\log_2 w_f+\delta-b}$. This happens with probability 
\[
    \log_2 (w_f+\delta) - \log_2 w_f = \log_2 \left(1+\frac{\delta}{w_f}\right)\leq \frac{\delta}{w_f \log 2}.\qedhere
\]
\end{proof}

Now, we have the following:
\begin{lemma}\label{lem:mst_lip}
We have
\begin{align*}
    \frac{1}{\delta}\int_{0}^{1}\left\|\Call{MSTGame}{G,w,b} - \Call{MSTGame}{G,w+\delta \bm{1}_f,b}\right\|_1 \mathrm{d}b\leq \frac{10\delta}{\log 2}.
\end{align*}
\end{lemma}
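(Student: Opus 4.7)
The plan is to mirror the structure of the proof of Lemma~\ref{lem:match_lip}: combine the per-$b$ bound $\|z-z'\|_1\leq \widehat{w}_f+2\widehat{w}'_f$ (when $\widehat{w}_f\neq \widehat{w}'_f$) with the probability bound $\Pr_b[\widehat{w}_f\neq \widehat{w}'_f]\leq \frac{\delta}{w_f\log 2}$, splitting on whether $w_f=0$ or $0<\delta\leq w_f$ (which by Lemma~\ref{lem:seeoneelement} is all we need). Throughout, for each fixed $b\in[0,1]$ for which $\widehat{w}_f=\widehat{w}'_f$, the two invocations of \Call{MSTGame}{} produce identical outputs, so such $b$ contribute nothing to the integrand.

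First I would handle the case $w_f>0$, $\delta\leq w_f$. Since $\widehat{w}_e$ is rounded up to a power of $2$ scaled by $2^b$, we have $\widehat{w}_f\leq 2w_f$ and $\widehat{w}'_f\leq 2(w_f+\delta)\leq 4w_f$, giving the pointwise bound
\begin{align*}
\|\Call{MSTGame}{G,w,b}-\Call{MSTGame}{G,w+\delta\bm{1}_f,b}\|_1\;\leq\;\widehat{w}_f+2\widehat{w}'_f\;\leq\;10w_f.
\end{align*}
Integrating over $b\in[0,1]$ and using the probability bound from the previous lemma (applicable since $w'_f=w_f+\delta\leq 2w_f$), one obtains
\begin{align*}
\int_0^1\|\cdot\|_1\,\mathrm{d}b\;\leq\;\tfrac{\delta}{w_f\log 2}\cdot 10w_f\;=\;\tfrac{10\delta}{\log 2}.
\end{align*}

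For the case $w_f=0$, the algorithm leaves $\widehat{w}_f$ undefined/treated as $0$ so edge $f$ is absent from $\widehat{w}$ in the first run, while in the second run $\widehat{w}'_f\leq 2\delta$. The condition $\widehat{w}_f\neq \widehat{w}'_f$ holds deterministically, but the bound $\widehat{w}_f+2\widehat{w}'_f\leq 4\delta$ still applies, so the integral is at most $4\delta\leq \frac{10\delta}{\log 2}$. Combining the two cases gives the claimed inequality (up to the global factor of $\delta$ on the left-hand side, which I read as a typo inherited from the statement, matching the form of Lemma~\ref{lem:match_lip}).

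I do not expect a real obstacle: the structural work was already done in establishing $\|z-z'\|_1\leq \widehat{w}_f+2\widehat{w}'_f$ via the explicit surgery on the auxiliary tree, and the probability estimate is the same power-of-$2$ threshold argument as in the matching case. The only delicate point worth spelling out is the $w_f=0$ boundary, where one must verify that the same telescoping decomposition $(z_1+z_2)\widehat{w}_f+(z_3-z_4-z_5)\widehat{w}'_f$ still yields the $\widehat{w}_f+2\widehat{w}'_f$ bound when $\widehat{w}_f=0$, which it does term-by-term.
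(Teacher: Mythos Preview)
Your proposal is correct and follows essentially the same approach as the paper: split on $w_f=0$ versus $0<\delta\leq w_f$, combine the per-$b$ bound $\widehat{w}_f+2\widehat{w}'_f$ with the probability estimate $\frac{\delta}{w_f\log 2}$, and conclude. The only cosmetic difference is that the paper uses the exact relation $\widehat{w}'_f=2\widehat{w}_f$ (when the threshold is crossed and $\delta\leq w_f$) to write $\widehat{w}_f+2\widehat{w}'_f=5\widehat{w}_f\leq 10w_f$, whereas you bound the two terms separately; both routes land at the same constant, and your reading of the stray $\frac{1}{\delta}$ as a typo matches the paper's own computation.
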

\begin{proof}
If $w_f=0$, we have
\begin{align*}
    \frac{1}{\delta}\int_{0}^{1}\left\|\Call{MSTGame}{G,w,b} - \Call{MSTGame}{G,w+\delta \bm{1}_f,b}\right\|_1 \mathrm{d}b
    \leq 2\widehat{w}'_f \leq 4w'_f = 4\delta.
\end{align*}
Otherwise, we have
\begin{align*}
    &\frac{1}{\delta}\int_{0}^{1}\left\|\Call{MSTGame}{G,w,b} - \Call{MSTGame}{G,w+\delta \bm{1}_f,b}\right\|_1 \mathrm{d}b\\
    &\leq \frac{\delta}{w_f\log 2}\left(\widehat{w}_f+2\widehat{w}'_f\right) 
    = \frac{\delta}{w_f\log 2}\cdot 5\widehat{w}_f 
    \leq \frac{\delta}{w_f\log 2}\cdot 10w_f = \frac{10\delta}{\log 2}.
 \end{align*}
\end{proof}

\subsection{Proof of Theorem~\ref{thm:mst}}\label{sec:mst_combine}
Combining Lemmas~\ref{lem:mst_core1},~\ref{lem:mst_core2}, and~\ref{lem:mst_lip} and applying Lemma~\ref{lem:benri_cost} yields Theorem~\ref{thm:mst}.
\begin{proof}[Proof of Theorem~\ref{thm:mst}]
Combining Lemmas~\ref{lem:mst_core1},~\ref{lem:mst_core2},~\ref{lem:mst_lip}, and~\ref{lem:benri_cost}, we obtain an algorithm that returns $4$-approximate core allocation for the matching game with Lipschitz constant $\frac{20}{\log 2}+1=O(1)$.
The fact that the allocation defined by Lemma~\ref{lem:match_final} can be computed in polynomial time is obtained using the same argument as in the proof of Theorem~\ref{thm:matching}.
\end{proof}

\section{Shapley Value}\label{sec:shapley}
In this section, we discuss the Lipschitz continuity of the Shapley values of the optimization games and prove Theorems~\ref{thm:shap_matching} and~\ref{thm:shap_mst}. 

\subsection{Matching Game}

Let $n$ be an odd integer greater than or equal to $5$. Consider a path $G=(V,E)$ consisting of $n$ vertices labeled sequentially as $v_1, \dots, v_{n}$. 
Let $\SHAPMATCH(w)$ be the Shapley value for the matching game with respect to weight $w$.
The weight vectors $w, w' \in \mathbb{R}_{\geq 0}^{E}$ are defined as
\begin{align*}
    w_{(v_i,v_{i+1})}&=1 \quad (i=1,\dots, n-1),\\
    w'_{(v_i,v_{i+1})}&=\begin{cases}
        1+\delta & \text{if }i = 2\\
        1 & \text{otherwise}.
    \end{cases}
\end{align*}
For this instance, we prove
\begin{align*}
    \|\SHAPMATCH(w)-\SHAPMATCH(w')\|\geq \Omega(\delta \log n),
\end{align*}
which directly implies Theorem~\ref{thm:shap_matching}.

For $\sigma\in \mathfrak{S}_V$ and $v\in V$ such that $\sigma(k)=v$, let 
\begin{align*}
    x_{\sigma,v}&=\OPT(\{\sigma(1),\dots,\sigma(k)\},w)-\OPT(\{\sigma(1),\dots, \sigma(k-1)\},w),\\
    x'_{\sigma,v}&=\OPT(\{\sigma(1),\dots,\sigma(k)\},w')-\OPT(\{\sigma(1),\dots, \sigma(k-1)\},w').
\end{align*}

\begin{lemma}\label{lem:shap_match_frac}
Let $i\geq 4$ be an even number. If $\sigma$ is uniformly sampled from $\mathfrak{S}_V$, then we have
\begin{align*}
    \left|\E\left[x_{\sigma,v_i}\right]-\E\left[x'_{\sigma,v_i}\right]\right|\geq \frac{\delta}{i+1}.
\end{align*}
\end{lemma}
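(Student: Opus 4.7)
The plan is to exploit that $w$ and $w'$ differ only on the edge $e^\star=(v_2,v_3)$, to isolate the subsets $T$ for which adding $v_i$ changes the correction term $g(S):=\OPT(G[S],w')-\OPT(G[S],w)$, and then to lower bound the resulting probability under a uniform random permutation.

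First, a case analysis on the maximum matching of the sub-path of $G[S]$ through $v_2,v_3$ shows that $g(S)\in\{0,\delta\}$, with $g(S)=\delta$ iff $\{v_2,v_3\}\subseteq S$ and some maximum (unweighted) matching of $G[S]$ uses $e^\star$. Writing $L:=\mathbf{1}[v_1\in S]$ and $R:=$ length of the maximal run $v_4,v_5,\ldots,v_{3+R}\in S$, comparing matching sizes with and without $e^\star$ reduces this condition to: $\{v_2,v_3\}\subseteq S$ and either $L=0$ or $R$ is even.

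Second, by linearity $\E[x_{\sigma,v_i}]-\E[x'_{\sigma,v_i}]=\E_\sigma[g(T)-g(T\cup\{v_i\})]$, where $T$ is the set of vertices preceding $v_i$ in $\sigma$. Since $i\ge 4$, adding $v_i$ affects neither $\mathbf{1}[\{v_2,v_3\}\subseteq T]$ nor $L$, so the difference is nonzero only when the parity of $R$ flips. This forces $v_4,\ldots,v_{i-1}\in T$ (so $R(T)=i-4$), and then $R(T\cup\{v_i\})=R(T)+1+s$, where $s$ is the length of the maximal run $v_{i+1},v_{i+2},\ldots\in T$; the parity flips iff $s$ is even. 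Because $i$ is even, $R(T)=i-4$ is even, so $g(T)=\delta$ while $g(T\cup\{v_i\})=0$: every nonzero marginal contribution is $+\delta$ and there are no cancellations, yielding
\[
  \E[x_{\sigma,v_i}] - \E[x'_{\sigma,v_i}] \;=\; \delta\cdot\Pr_\sigma[\mathcal{E}_i],
\]
where $\mathcal{E}_i$ is the event ``$v_1,\ldots,v_{i-1}\in T$ and $s$ is even''.

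Third, I would evaluate $\Pr[\mathcal{E}_i]$ by decomposing over the value of $s$. For $s=2k$ with $0\le 2k\le n-i-1$ (the upper bound using that $n-i$ is odd), the event $\mathcal{E}_i\cap\{s=2k\}$ prescribes a specific relative order on $\{v_1,\ldots,v_{i+2k+1}\}$: the ``before'' group of $i+2k-1$ vertices in any order, then $v_i$, then $v_{i+2k+1}$. Direct counting gives $\Pr[\mathcal{E}_i\cap\{s=2k\}]=1/\bigl((i+2k)(i+2k+1)\bigr)$, and hence
\[
  \Pr[\mathcal{E}_i] \;=\; \sum_{k=0}^{(n-i-1)/2} \frac{1}{(i+2k)(i+2k+1)}.
\]
The main obstacle is converting this partial alternating-type sum into the clean lower bound $1/(i+1)$. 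The natural strategy is pairing and telescoping, for instance by replacing each summand by the smaller $\tfrac{1}{(i+2k)(i+2k+2)}=\tfrac{1}{2}(1/(i+2k)-1/(i+2k+2))$ to obtain a collapsing series, and then correcting the leading terms by hand; this is the one delicate analytic step of the lemma.
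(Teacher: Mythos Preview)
Your reduction to the event $\mathcal{E}_i=\{v_1,\dots,v_{i-1}\text{ precede }v_i\text{ and }s\text{ is even}\}$ is correct and coincides with the paper's (in their notation, $c_1=i-1$ and $c_2$ even); both routes arrive at
\[
\Pr[\mathcal{E}_i]\;=\;\sum_{k=0}^{(n-i-1)/2}\frac{1}{(i+2k)(i+2k+1)}
\;=\;\frac{1}{i}-\frac{1}{i+1}+\frac{1}{i+2}-\cdots-\frac{1}{n}.
\]
The gap is the last step: the inequality $\Pr[\mathcal{E}_i]\ge 1/(i+1)$ is \emph{false}, so the ``delicate analytic step'' you flag is not merely delicate but impossible. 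Already for $i=4$, $n=5$ the sum is $\tfrac{1}{20}<\tfrac{1}{5}$; and even as $n\to\infty$ the alternating tail for $i=4$ converges to $\tfrac{5}{6}-\ln 2\approx 0.140<\tfrac{1}{5}$. Your own telescoping yields the honest lower bound $\Pr[\mathcal{E}_i]\ge\tfrac{1}{2i}-\tfrac{1}{2(n+1)}$, which is of the right order but a factor of roughly two short of the lemma's constant --- no ``correcting leading terms by hand'' will recover that factor.

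For comparison, the paper short-circuits the sum by passing to the sub-event $\{c_1=i-1,\ c_2=0\}$ and identifying it with ``$v_i$ is last among $v_1,\dots,v_{i+1}$'', claiming probability $1/(i+1)$. But $\{c_1=i-1,\ c_2=0\}$ actually requires $v_i$ second-to-last and $v_{i+1}$ last among those $i+1$ vertices, which has probability $1/\bigl(i(i+1)\bigr)$; so the paper's argument carries the same defect. The downstream $\Omega(\log n)$ conclusion survives regardless: your telescoped bound gives $\Pr[\mathcal{E}_i]\ge\tfrac{1}{4i}$ whenever $2i\le n+1$, and summing $\tfrac{\delta}{4i}$ over even $i$ in that range already yields $\Omega(\delta\log n)$. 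Only the constant in the lemma statement needs adjusting.
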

\begin{proof}
Let $k=\sigma^{-1}(v_i)$ and $S=\{v\in V\colon \sigma^{-1}(v)<k\}$.
Let $c_1$ (resp., $c_2$) be the size of the connected component of graph $G[S]$ containing $v_{i-1}$ (resp., $v_{i+1}$). If $v_{i-1}\not \in S$ (resp., $v_{i+1}\not \in S$), we set $c_1=0$ (resp., $c_2=0$).

Clearly, when the connected component of $G[S]$ containing $v_{i-1}$ does not include edge $(v_2,v_3)$, that is, when $c_1 \leq i-3$, it holds that $x_{\sigma,v_i} = x'_{\sigma,v_i}$. When $c_1 = i-2$, we have
\begin{align*}
    x_{\sigma,v_i}&=\begin{cases}
        1 & \text{if $c_2$ is odd}\\
        0 & \text{otherwise}             
    \end{cases}\\
    x'_{\sigma,v_i}&=\begin{cases}
        1 & \text{if $c_2$ is odd}\\
        0 & \text{otherwise}\\
    \end{cases}
\end{align*}
and thus $x_{\sigma,v_i}=x'_{\sigma,v_i}$. When $c_1=i-1$, we have
\begin{align*}
    x_{\sigma,v_i}&=\begin{cases}
        1 & \text{if $c_2$ is even}\\
        0 & \text{otherwise}             
    \end{cases}\\
    x'_{\sigma,v_i}&=\begin{cases}
        1-\delta & \text{if $c_2$ is even}\\
        0 & \text{otherwise}\\
    \end{cases}
\end{align*}
Therefore, $x_{\sigma,v_i} \neq x'_{\sigma,v_i}$ happens only when $c_1 = i-1$ and $c_2$ is even. In such cases, we have $x_{\sigma,v_i} - x'_{\sigma,v_i} = \delta$.
Now, we have
\begin{align*}
    \left|\E\left[x_{\sigma,v_i}\right]-\E\left[x'_{\sigma,v_i}\right]\right|
    &= \Pr\left[x_{\sigma,v_i}\neq x'_{\sigma,v_i}\right]\cdot \delta\\
    &= \Pr\left[c_1=i-1 \land c_2 \equiv 0 \pmod 2\right]\cdot \delta\\
    &\geq \Pr\left[c_1=i-1 \land c_2 = 0\right]\cdot \delta\\
    &= \Pr\left[\max(\sigma^{-1}(v_1),\dots, \sigma^{-1}(v_{i+1}))=\sigma^{-1}(v_i)\right]\cdot \delta\\
    &= \frac{\delta}{i+1}.\qedhere
\end{align*}
\end{proof}
Theorem~\ref{thm:shap_matching} is derived directly from the following lemma:
\begin{lemma}
We have
\begin{align*}
    \|\SHAPMATCH(w)-\SHAPMATCH(w')\|\geq \Omega(\delta \log n),
\end{align*}
\end{lemma}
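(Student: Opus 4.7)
The plan is to reduce the claim to Lemma~\ref{lem:shap_match_frac} by a direct coordinate-wise lower bound on the $\ell_1$ distance between the two Shapley vectors. Recall that by definition the $v$-th coordinate of $\SHAPMATCH(w)$ equals $\E_{\sigma\sim\mathfrak{S}_V}[x_{\sigma,v}]$, and likewise for $w'$. Therefore
\begin{align*}
\|\SHAPMATCH(w)-\SHAPMATCH(w')\|_1 &= \sum_{v\in V}\bigl|\E[x_{\sigma,v}]-\E[x'_{\sigma,v}]\bigr| \\
&\geq \sum_{\substack{i\ \text{even}\\ 4\leq i\leq n-1}}\bigl|\E[x_{\sigma,v_i}]-\E[x'_{\sigma,v_i}]\bigr|,
\end{align*}
where we restrict the sum to even $i\in\{4,6,\dots,n-1\}$ (note $n-1$ is even since $n$ is odd).

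Plugging the bound from Lemma~\ref{lem:shap_match_frac} into each surviving term yields
\begin{align*}
\|\SHAPMATCH(w)-\SHAPMATCH(w')\|_1 \geq \sum_{\substack{i\ \text{even}\\ 4\leq i\leq n-1}}\frac{\delta}{i+1} = \delta\sum_{k=2}^{(n-1)/2}\frac{1}{2k+1}.
\end{align*}
The remaining step is the estimate of this partial sum: it is half a tail of the harmonic series, so $\sum_{k=2}^{(n-1)/2}\frac{1}{2k+1}\geq \frac{1}{2}\bigl(\ln((n-1)/2)-\ln 2\bigr)=\Omega(\log n)$, which yields the desired conclusion.

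There is no real obstacle here; the content of the theorem is entirely carried by Lemma~\ref{lem:shap_match_frac}, and what remains is the elementary observation that summing the per-coordinate lower bounds $\delta/(i+1)$ over even $i$ in $[4,n-1]$ produces a logarithmic factor. The only subtlety to be careful about is making sure the restriction to even indices $i\geq 4$ still leaves $\Theta(n)$ terms contributing, so that the harmonic sum indeed has logarithmic length; this is immediate from $n\geq 5$ being odd.
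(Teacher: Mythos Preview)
Your proof is correct and follows essentially the same approach as the paper: both write the $\ell_1$ distance coordinate-wise, restrict to even indices $i\in\{4,\dots,n-1\}$, invoke Lemma~\ref{lem:shap_match_frac} for each such term, and observe that the resulting sum $\sum \delta/(i+1)$ is $\Omega(\delta\log n)$. You supply a bit more detail on the harmonic-sum estimate, but the argument is the same.
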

\begin{proof}
From Lemma~\ref{lem:shap_match_frac}, we have
\begin{align*}
    \|\SHAPMATCH(w)-\SHAPMATCH(w')\|
    &= \sum_{i=1}^{n}\left|\E\left[x_{\sigma,v_i}\right]-\E\left[x'_{\sigma,v_i}\right]\right|\\
    &\geq \sum_{i\in \{4,6,8,\dots, n-1\}}\left|\E\left[x_{\sigma,v_i}\right]-\E\left[x'_{\sigma,v_i}\right]\right|\\
    &\geq \sum_{i\in \{4,6,8,\dots, n-1\}}\frac{\delta}{i+1}
    \geq \Omega(\delta \log n).\qedhere
\end{align*}
\end{proof}

\subsection{Minimum Spanning Tree Game}

We begin with the following:
\begin{lemma}\label{lem:mstshapdiff}
Let $f\in E$, $S\subseteq V$, and $v\in V$ such that $f\subseteq S\subseteq V\setminus \{v\}$.
Let $\delta>0$ and $w'=w+\delta \bm{1}_f$.
Then, we have
\begin{align*}
    \OPT(S\cup \{v,r\},w')-\OPT(S\cup \{v,r\},w)\leq \OPT(S\cup \{r\},w')-\OPT(S\cup \{r\},w).
\end{align*}
\end{lemma}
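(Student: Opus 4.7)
The plan is to reduce the claim to the standard sensitivity analysis of the minimum spanning tree. For any graph $H$ and any edge $f \in E(H)$, define the \emph{critical weight}
\[
    \mu_H(f) := \min_{\substack{C \text{ cycle of } H \\ f \in C}} \max_{g \in E(C) \setminus \{f\}} w(g),
\]
with the convention $\mu_H(f) = +\infty$ when $f$ is a bridge of $H$. The classical MST sensitivity identity says that for any $\delta \geq 0$, the MST weight of $H$ under weight $w + \delta \bm{1}_f$ exceeds the MST weight of $H$ under $w$ by exactly $\max\bigl(0, \min(\delta, \mu_H(f) - w(f))\bigr)$. This is the cycle/cut exchange property: if $\mu_H(f) < w(f)$, then $f$ is strictly the heaviest edge of some cycle and lies in no MST before or after the perturbation, so the optimum is unchanged; otherwise an MST containing $f$ can either retain $f$ (paying $\delta$) or swap $f$ for the lightest alternative in its fundamental cut, whose weight equals $\mu_H(f)$, and the new MST picks whichever is cheaper.

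Specializing this identity to $H = G[S \cup \{r\}]$ and $H = G[S \cup \{v, r\}]$---both of which contain $f$ as an edge, since $f \subseteq S$---the inequality to be proved becomes
\[
    \max\bigl(0, \min(\delta, \mu_{G[S \cup \{v, r\}]}(f) - w(f))\bigr) \leq \max\bigl(0, \min(\delta, \mu_{G[S \cup \{r\}]}(f) - w(f))\bigr).
\]
Since the map $\mu \mapsto \max\bigl(0, \min(\delta, \mu - w(f))\bigr)$ is non-decreasing, it is enough to establish $\mu_{G[S \cup \{v, r\}]}(f) \leq \mu_{G[S \cup \{r\}]}(f)$.

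This last inequality is immediate from the definition: every cycle of $G[S \cup \{r\}]$ through $f$ is also a cycle of the supergraph $G[S \cup \{v, r\}]$ through $f$, so the minimum defining $\mu_{G[S \cup \{v, r\}]}(f)$ is taken over a superset of cycles and cannot be larger. Conceptually, enlarging the coalition by $v$ can only create new cycles through $f$, giving more alternatives should $f$'s weight rise. The only mildly technical step is justifying the sensitivity identity in the paper's notation, but this is classical matroid theory and poses no real obstacle.
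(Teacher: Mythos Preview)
Your argument is correct and takes a genuinely different route from the paper. The paper proves the inequality by a direct edge-exchange argument: it fixes MSTs $T$ of $G[S\cup\{r\}]$ and $R$ of $G[S\cup\{v,r\}]$, analyzes how removing $v$ from $R$ breaks it into components $C_1,\dots,C_k$ and how removing $f$ further splits one $C_i$ into $D_1,D_2$, and then explicitly constructs a replacement edge $f''$ with $w'_{f''}\le w'_{f'}$ (where $f'$ is the swap edge for $T$) so that $R-f+f''$ is a tree. Your approach instead packages the entire exchange argument into the single sensitivity identity $\text{MST}(w+\delta\bm{1}_f)-\text{MST}(w)=\max(0,\min(\delta,\mu_H(f)-w_f))$ and reduces the lemma to the one-line observation that enlarging the host graph can only add cycles through $f$, hence can only decrease $\mu$. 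Your route is shorter, more conceptual, and extends verbatim to any matroid (replacing ``cycle'' by ``circuit''); the paper's route is more self-contained in that it does not appeal to an external sensitivity result, but it pays for this with a somewhat delicate case analysis. The one place where a referee might press you is the sensitivity identity itself: it is indeed classical (it follows from the cycle/cut optimality conditions and the fact that for $f$ in an MST $T$, $\mu_H(f)$ equals the minimum weight over edges $\ne f$ in the fundamental cut of $f$), but you should be prepared to supply a short proof or a precise reference rather than leave it as ``classical matroid theory.''
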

\begin{proof}
Let $T$ and $R$ be the minimum spanning trees with respect to weight $w$ for $G[S\cup \{r\}]$ and $G[S\cup \{v,r\}]$, respectively. 
Similarly, let $T'$ and $R'$ be the minimum spanning trees with respect to weight $w'$ for $G[S\cup \{r\}]$ and $G[S\cup \{v,r\}]$, respectively. 
If $f$ is not an edge of $R$, the left-hand side is zero; thus, the lemma holds. 
Assume otherwise.
Let the connected components of the graph obtained by removing $v$ and its incident edges from $R$ be $C_1, \dots, C_k$. Because $v$ is not an endpoint of $f$ by the assumption of the lemma, $f$ is the edge of one of the $C_i$'s. Without loss of generality, we assume that this is an edge of $C_1$. Let $D_1$ and $D_2$ be the two connected components formed by removing $f$ from $C_1$, where $D_1$ contains the neighbor of $v$ on $R$, and $D_2$ does not.

$T$ is formed by connecting trees $C_1, \dots, C_k$ with $k-1$ edges. In particular, $f$ is an edge of $T$. $T'$ is obtained by removing $f$ from $T$ and adding an edge $f'$ (which may be the same as $f$).
We show that an edge $f''$ with $w'_{f''} \leq w'_{f'}$ exists such that $R\setminus \{f\}\cup \{f''\}$ is a tree. If this is shown, then we have
\begin{align*}
&\OPT(S\cup \{v,r\},w')-\OPT(S\cup \{v,r\},w)\\
&\leq w'_{f''} - w_f
\leq w'_{f'}-w_f
= \OPT(S\cup \{r\},w')-\OPT(S\cup \{r\},w)
\end{align*}
and the lemma is proved.

If one endpoint of $f'$ belongs to $D_2$, then, by the connectivity of $T'$, the other endpoint of $f'$ does not belong to $D_2$. Therefore, by setting $f''=f'$, $v$ and $D_2$ are connected on $R\setminus \{f\}\cup \{f''\}$, which implies it forms a tree, and $w'_{f''}=w'_{f'}$ holds. 
Otherwise, let $P$ be the unique path on $T'$ such that it has two endpoints in $D_1$ and $D_2$, and contains no other vertices in $D_1 \cup D_2 = C_1$. Let $f''$ be the unique edge on $P$ with an endpoint in $D_2$. Because $f''$ is not contained in $C_1$, we have $f \neq f''$. As $v$ and $D_2$ are connected on $R\setminus \{f\}\cup \{f''\}$, it forms a tree. 
Furthermore, $T\setminus \{f''\}\cup \{f'\}$ is a tree. Because $T$ is the minimum spanning tree with respect to weight $w$, it follows that $w'_{f''}=w_{f''}\leq w_{f'}$. Hence, the lemma is proved.
\end{proof}

Let $\SHAPMST(w)$ be the Shapley value for the minimum spanning tree game with respect to weight $w$. 
Theorem~\ref{thm:shap_mst} is proved by combining the next lemma and Lemma~\ref{lem:seeoneelement}.
\begin{lemma}
Let $\delta > 0$ and $f\in E$. Then, we have
\begin{align*}
    \left\|\SHAPMST(w)-\SHAPMST(w+\delta \bm{1}_f)\right\|_1\leq 2\delta.    
\end{align*}
\end{lemma}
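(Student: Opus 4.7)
Write $w' = w + \delta \bm{1}_f$ and let $E_f \subseteq V$ denote the endpoints of $f$ that are agents (so $|E_f| = 1$ if $r$ is an endpoint of $f$, and $|E_f| = 2$ otherwise). For $\sigma \in \mathfrak{S}_V$ with $\sigma(k) = v$, abbreviate $S_{\sigma,v} = \{\sigma(1),\dots,\sigma(k-1)\}$ and $\Delta_{\sigma,v} = x_{\sigma,v}(w') - x_{\sigma,v}(w)$. The plan has three steps: (i) pin down the sign of $\Delta_{\sigma,v}$ for every pair $(\sigma,v)$; (ii) collapse the $\ell_1$-norm to an expression involving only the endpoints in $E_f$ via efficiency of the Shapley value; and (iii) bound that expression by $\delta$ through a per-$\sigma$ argument.

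In step (i), I claim $\Delta_{\sigma,v} \leq 0$ when $v \notin E_f$ and $\Delta_{\sigma,v} \geq 0$ when $v \in E_f$. If $v \notin E_f$ and both endpoints of $f$ lie in $S_{\sigma,v} \cup \{r\}$, the claim is Lemma~\ref{lem:mstshapdiff} (whose proof only uses that $v$ is non-incident to $f$, and so carries over to the case in which $r$ is an endpoint of $f$); in the remaining subcase for $v \notin E_f$, not both endpoints of $f$ lie even in $S_{\sigma,v} \cup \{v,r\}$, so the two $\OPT$ values defining $\Delta_{\sigma,v}$ are independent of $w_f$ and $\Delta_{\sigma,v} = 0$. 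When $v \in E_f$: if the other endpoint of $f$ is absent from $S_{\sigma,v} \cup \{r\}$, both $\OPT$ values are unchanged; otherwise $f \not\subseteq S_{\sigma,v} \cup \{r\}$ (so $\OPT(S_{\sigma,v} \cup \{r\}, \cdot)$ is constant in $w_f$) while $\OPT(S_{\sigma,v} \cup \{v,r\}, w') - \OPT(S_{\sigma,v} \cup \{v,r\}, w) \in [0,\delta]$, yielding $\Delta_{\sigma,v} \in [0,\delta]$.

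For step (ii), set $P = \sum_{v \in E_f}(\SHAPMST(w')_v - \SHAPMST(w)_v) \geq 0$ and $N = \sum_{v \notin E_f}(\SHAPMST(w)_v - \SHAPMST(w')_v) \geq 0$. Efficiency of the Shapley value gives
\[
P - N = \sum_{v \in V}(\SHAPMST(w')_v - \SHAPMST(w)_v) = \OPT(V,w') - \OPT(V,w) =: \Delta_V \in [0,\delta],
\]
so $\|\SHAPMST(w) - \SHAPMST(w')\|_1 = P + N = 2P - \Delta_V$. For step (iii), fix $\sigma$. Among the endpoints of $f$ only the one that appears \emph{latest} in the induced ordering on $V$ can contribute a nonzero term to $\sum_{v \in E_f}\Delta_{\sigma,v}$: for any earlier endpoint the other endpoint of $f$ is still absent from $S_{\sigma,v}$, which was shown in step (i) to force $\Delta_{\sigma,v} = 0$. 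The unique possibly nonzero term lies in $[0,\delta]$ by step (i), so $\sum_{v \in E_f}\Delta_{\sigma,v} \in [0,\delta]$ for every $\sigma$. Averaging over $\sigma$ gives $P \leq \delta$, hence $\|\SHAPMST(w) - \SHAPMST(w')\|_1 = 2P - \Delta_V \leq 2\delta$.

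The main obstacle is the sign analysis in step (i)---especially checking that Lemma~\ref{lem:mstshapdiff} generalizes cleanly to edges $f$ incident to $r$---together with the observation that pins $P$ and $N$ to opposite sides of $E_f$; once the signs are in hand, the rest is a direct exploitation of the efficiency axiom and the fact that, per permutation, only one endpoint of $f$ can record a change.
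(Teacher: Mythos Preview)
Your proof is correct and follows essentially the same route as the paper. Both arguments hinge on the same three ingredients: (a) the sign pattern of $\Delta_{\sigma,v}$ coming from Lemma~\ref{lem:mstshapdiff} for non-endpoints and a direct check for endpoints, (b) the fact that for each permutation exactly one $\Delta_{\sigma,v}$ can be strictly positive (at the last-arriving agent endpoint of $f$) and that term is at most $\delta$, and (c) efficiency/telescoping. The paper bounds $\sum_j|\Delta_{\sigma,j}|\le 2\delta$ per permutation and then averages; you instead average first to form $P$ and $N$, observe $\|\cdot\|_1 = 2P-\Delta_V$ exactly, and bound $P\le\delta$. This is the same arithmetic in a different order, with the bonus that you explicitly flag the extension of Lemma~\ref{lem:mstshapdiff} to edges incident to $r$, which the paper uses implicitly.
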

\begin{proof}
For $\sigma \in \mathfrak{S}_{V}$ and $k\in \{1,\dots, |V|\}$, let
\begin{align*}
    x_{\sigma,k}&=\OPT(\{r,\sigma(1),\dots, \sigma(k)\},w)-\OPT(\{r,\sigma(1),\dots, \sigma(k-1)\},w),\\
    x'_{\sigma,k}&=\OPT(\{r,\sigma(1),\dots, \sigma(k)\},w+\delta \bm{1}_f)-\OPT(\{r,\sigma(1),\dots, \sigma(k-1)\},w+\delta \bm{1}_f).
\end{align*}
Let $i$ be the first index, such that $f\subseteq \{r,\sigma(1),\dots, \sigma(i)\}$. Then, for $j<i$, we have $x_{\sigma, j}=x'_{\sigma, j}$. We also have
\begin{align*}
    x'_{\sigma,i}-x_{\sigma,i} = \OPT(\{r,\sigma(1),\dots, \sigma(i)\},w+\delta \bm{1}_f)-\OPT(\{r,\sigma(1),\dots, \sigma(i)\},w)\leq \delta.
\end{align*}
Furthermore, Lemma~\ref{lem:mstshapdiff} implies that for $j>i$, we have $x'_{\sigma,j}-x_{\sigma,j}\leq 0$.
Finally, we have
\begin{align*}
    \sum_{j=1}^{|V|}\left(x'_{\sigma,j}-x_{\sigma,j}\right)
    = \OPT(V\cup \{r\},w+\delta \bm{1}_f)-\OPT(V\cup \{r\},w)\geq 0.
\end{align*}
Thus, we have
\begin{align*}
    \sum_{j=1}^{|V|}\left|x'_{\sigma,j}-x_{\sigma,j}\right|
    \leq \delta + \left(\delta - \sum_{j=1}^{|V|}\left(x'_{\sigma,j}-x_{\sigma,j}\right)\right)
    \leq 2\delta.
\end{align*}
Therefore, we have
\[
    \left\|\SHAPMST(w)-\SHAPMST(w')\right\|_1
    \leq \frac{1}{|V|!}\sum_{\sigma \in \mathfrak{S}_{V}}\sum_{j=1}^{|V|}\left|x'_{\sigma,j}-x_{\sigma,j}\right|
    \leq 2\delta. \qedhere
\]
\end{proof}

\bibliographystyle{abbrv}
\bibliography{main}

\appendix

\section{Omitted Proofs}\label{sec:omitted}

\begin{proof}[Proof of Lemma~\ref{lem:benri_welfare}]
Let $x(w)=\frac{\nu(V,w)}{\|\mathcal{A}(w)\|_1}\mathcal{A}(w)$. Then, $x(w)$ is in the $\frac{1}{D}$-approximate core because $\|x(w)\|_1=\nu(V,w)$ and
\begin{align*}
    \sum_{v\in S}x(w)_v=\frac{\nu(V,w)}{\|\mathcal{A}(w)\|_1}\sum_{v\in S}\mathcal{A}(w)_v\geq \frac{1}{D}\sum_{v\in S}\mathcal{A}(w)_v\geq \frac{1}{D}\nu(S,w).
\end{align*}
Furthermore, we have
\begin{align*}
    &\|x(w)-x(w')\|_1 \\
    &= \left\|\frac{\nu(V,w)}{\|\mathcal{A}(w)\|_1}\mathcal{A}(w)-\frac{\nu(V,w')}{\|\mathcal{A}(w')\|_1}\mathcal{A}(w')\right\|_1\\
    &\leq \left\|\frac{\nu(V,w)}{\|\mathcal{A}(w)\|_1}\mathcal{A}(w)-\frac{\nu(V,w)}{\|\mathcal{A}(w)\|_1}\mathcal{A}(w')\right\|_1
    +\left\|\frac{\nu(V,w)}{\|\mathcal{A}(w)\|_1}\mathcal{A}(w')-\frac{\nu(V,w)}{\|\mathcal{A}(w')\|_1}\mathcal{A}(w')\right\|_1\\
    &\quad + \left\|\frac{\nu(V,w)}{\|\mathcal{A}(w')\|_1}\mathcal{A}(w')-\frac{\nu(V,w')}{\|\mathcal{A}(w')\|_1}\mathcal{A}(w')\right\|_1\\
    &= \frac{\nu(V,w)}{\|\mathcal{A}(w)\|_1}\cdot \|\mathcal{A}(w)-\mathcal{A}(w')\|_1 
    + \left|\frac{\|\mathcal{A}(w')\|_1-\|\mathcal{A}(w)\|_1}{\|\mathcal{A}(w)\|_1}\right|\nu(V,w)
    + |\nu(V,w)-\nu(V,w')|\\
    &\leq 1\cdot \|\mathcal{A}(w)-\mathcal{A}(w')\|_1 + 1\cdot \|\mathcal{A}(w)-\mathcal{A}(w')\|_1 + \|w-w'\|_1\\
    &\leq 2L\|w-w'\|_1+\|w-w'\|_1=(2L+1)\|w-w'\|_1.
    \qedhere
\end{align*}
\end{proof}

\begin{proof}[Proof of Lemma~\ref{lem:benri_cost}]
Let $x(w)=\frac{\nu(V,w)}{\|\mathcal{A}(w)\|_1}\mathcal{A}(w)$. Then, $x(w)$ is in the $D$-approximation core because $\|x(w)\|_1=\nu(V,w)$ and
\begin{align*}
    \sum_{v\in S}x(w)_v=\frac{\nu(V,w)}{\|\mathcal{A}(w)\|_1}\sum_{v\in S}\mathcal{A}(w)_v\leq \sum_{v\in S}\mathcal{A}(w)_v\leq D\nu(S,w).
\end{align*}
Furthermore, we have
\begin{align*}
    &\|x(w)-x(w')\|_1 \\
    &= \left\|\frac{\nu(V,w)}{\|\mathcal{A}(w)\|_1}\mathcal{A}(w)-\frac{\nu(V,w')}{\|\mathcal{A}(w')\|_1}\mathcal{A}(w')\right\|_1\\
    &\leq \left\|\frac{\nu(V,w)}{\|\mathcal{A}(w)\|_1}\mathcal{A}(w)-\frac{\nu(V,w)}{\|\mathcal{A}(w)\|_1}\mathcal{A}(w')\right\|_1
    +\left\|\frac{\nu(V,w)}{\|\mathcal{A}(w)\|_1}\mathcal{A}(w')-\frac{\nu(V,w)}{\|\mathcal{A}(w')\|_1}\mathcal{A}(w')\right\|_1\\
    &\quad + \left\|\frac{\nu(V,w)}{\|\mathcal{A}(w')\|_1}\mathcal{A}(w')-\frac{\nu(V,w')}{\|\mathcal{A}(w')\|_1}\mathcal{A}(w')\right\|_1\\
    &= \frac{\nu(V,w)}{\|\mathcal{A}(w)\|_1}\cdot \|\mathcal{A}(w)-\mathcal{A}(w')\|_1 
    + \left|\frac{\|\mathcal{A}(w')\|_1-\|\mathcal{A}(w)\|_1}{\|\mathcal{A}(w)\|_1}\right|\nu(V,w) 
    + |\nu(V,w)-\nu(V,w')|\\
    &\leq 1\cdot \|\mathcal{A}(w)-\mathcal{A}(w')\|_1 + 1\cdot \|\mathcal{A}(w)-\mathcal{A}(w')\|_1 + \|w-w'\|_1\\
    &\leq 2L\|w-w'\|_1+\|w-w'\|_1=(2L+1)\|w-w'\|_1.
    \qedhere
\end{align*}
\end{proof}

\end{document}